\documentclass[a4paper,10pt]{article}

\usepackage{braket} 
\usepackage{bm} 
\usepackage{cancel} 
\usepackage{enumerate}
\usepackage{amssymb, amsmath, amsthm}
\usepackage{multirow}
\usepackage{mathrsfs} 
\usepackage{breqn} 
\usepackage{fancyhdr} 
\usepackage{lastpage} 
\usepackage{extramarks} 
\usepackage{changepage}
\usepackage{graphicx}
\usepackage{empheq}
\usepackage{caption,subcaption}
\usepackage{xfrac} 
\usepackage{graphics,setspace} 
\usepackage{pgfplots}
\usepackage{relsize}
\usepackage{floatrow}
\usepackage{bbm} 
\usepackage{nth}
\usepackage{color}
\usepackage{listings}
\usepackage{tikz}
\usepackage{hyperref}
\usetikzlibrary{spy}

\topmargin=-0.4in
\evensidemargin=0in
\oddsidemargin=0in
\textwidth=6.5in
\textheight=9.2in
\headsep=0.1in

\linespread{1.1} 
\setlength\footskip{20pt}



\newfloatcommand{capbtabbox}{table}[][\FBwidth]


\DeclareMathOperator{\RE}{Re} 
\DeclareMathOperator{\IM}{Im} 
\DeclareMathOperator{\Tr}{Tr} 



\AtBeginDocument{\renewcommand{\d}{\textrm{d}}}



\newcommand{\cc}{^*}

\newcommand{\pd}[2]{\frac{\partial #1}{\partial #2}}
\newcommand{\pdd}[3][]{\frac{\partial^{#1} #2}{\partial #3^{#1}}}

\newcommand{\D}[2][]{\frac{\d^{#1}}{\d #2^{#1}}}

\newcommand{\abs}[1]{\left| #1 \right|}

\newcommand{\tr}{\text{tr}} 
\newcommand{\R}{\mathbb{R}} 

\newcommand{\N}{\mathbb{N}} 
\newcommand{\C}{\mathbb{C}} 
 
\newcommand{\cP}{\mathcal{P}} 
\newcommand{\cS}{\mathcal{S}} 
\newcommand{\cU}{\mathcal{U}} 
 
\newcommand{\cQ}{\mathcal{Q}} 
 %
 %
\newcommand{\cE}{\mathcal{E}} %
\newcommand{\cL}{\mathcal{L}} %
 %

\newcommand{\lt}{\left}
\newcommand{\rt}{\right}
\newcommand{\Langle}{\lt\langle}
\newcommand{\Rangle}{\rt\rangle}

\newcommand{\eps}{\epsilon}

\newcommand{\lam}{\lambda}

\newcommand{\0}[1]{\mathcal{O}\lt(#1\rt)}
\newcommand{\bigo}[1]{\mathcal{O}} 



\let\oldchi\chi
\renewcommand{\chi}{\raisebox{1pt}{$\oldchi$}}

\newcommand{\vs}{\textit{vs.}~}

\newenvironment{definition}[1][Definition.]{\begin{trivlist}
\item[\hskip \labelsep {\bfseries #1}]}{\end{trivlist}}

\newtheorem{theorem}{Theorem}
\newtheorem{lemma}[theorem]{Lemma}

\theoremstyle{definition}
\newtheorem{remark}[theorem]{Remark}

\newtheorem*{example*}{Example}

\theoremstyle{plain}
\newtheorem{assumption}[theorem]{Assumption}

\usepackage[utf8]{inputenc}
\usepackage{epstopdf}
\usepackage{enumitem}
\usepackage[title]{appendix}
\usepackage{cancel}
\usetikzlibrary{arrows}
\usetikzlibrary{decorations.markings}
\usepackage{mathtools}
\newcommand{\te}[1]{{\leavevmode\color{black}#1}}

\title{Real Lax spectrum implies spectral stability}
\author{Jeremy Upsal and Bernard Deconinck
  \\
Department of Applied Mathematics,\\
University of Washington,\\
Seattle, WA 98195, USA}

\begin{document}

\maketitle
\begin{abstract}
We consider the dynamical stability of periodic and quasiperiodic stationary
solutions of integrable equations with 2$\times$2 Lax pairs. We construct the
eigenfunctions and hence the Floquet discriminant for such Lax pairs. The
boundedness of the eigenfunctions determines the \emph{Lax spectrum.} We use
the squared eigenfunction connection between the Lax spectrum and the stability
spectrum to show that the subset of the real line that gives rise to stable
eigenvalues is contained in the Lax spectrum.  For non-self-adjoint members of
the AKNS hierarchy admitting a common reduction, the real line is always part
of the Lax spectrum and it maps to stable eigenvalues of the stability problem.
We demonstrate our methods work for a variety of examples, both in and not in
the AKNS hierarchy.
\end{abstract}

\section{Introduction}
A surprisingly large number equations of physical significance are integrable
and possess a Lax pair \cite{AKNS74, SolitonsAndIST}. An important feature of
equations with a Lax pair is the Lax spectrum: the set of all Lax parameter
values for which the solution of the Lax pair is bounded. For our purposes,
this is important for determining the stability of solutions of a given
integrable equation \cite{ bottman2009kdv, bottman2011elliptic,
calini2011squared, deconinck2010orbital, SGstability, mkdvOrbitalStability,
deconinck2017stability, upsalNLS, forest1982spectral, leeThesis,
kdvFGOrbitalStability}.  Until recently, the Lax spectrum has only been
determined explicitly for decaying potentials on the whole line or for
self-adjoint problems with periodic coefficients. For such problems, the
Floquet discriminant \cite{ ablowitz1996computational, calini2011squared,
MR1123280, forest1982spectral, ivey2008spectral, leeThesis} is a useful tool
for numerically computing and giving a qualitative description of the Lax
spectrum, but it is not used generally to get an explicit description of the
Lax spectrum. A full description of the Lax spectrum can allow one to prove the
stability of solutions to integrable equations with respect to certain classes
of perturbations \cite{upsalNLS}.

\te{
In Section 2 we define the type of stability we study for general nonlinear
evolution equations. In Section 3 we restrict to a special class of evolution
equations, those in the AKNS hierarchy. We construct a function defined by an integral,
whose zero level set determines the Lax spectrum for members of the AKNS
hierarchy. Using this, we show that the Lax spectrum for members of the AKNS
hierarchy admitting a common reduction contains a subset of the real line that
maps to stable elements of the stability problem under the
\emph{squared-eigenfunction connection}. For non-self-adjoint members of the
AKNS hierarchy with this reduction, this subset is the whole real line. In
Section 4, we first show how these results apply to
some example equations in the AKNS hierarchy. We finish Section 4 by showing
how the results in Section 3 also apply to an equation that is not in the AKNS
hierarchy. This leads to Section 5 in which we generalize the results of
Section 3 to integrable equations possessing a $2\times 2$ Lax pair that
are not in the AKNS hierarchy. In Section 6, we apply the results of Section 5
to a number of examples. At the end of Section 6 we demonstrate how similar
results are obtained for equations that do not fit into the framework of
Section 5, suggesting that the work here may be more general than what we
present here. For the interested reader, we construct the Floquet discriminant for these problems in the
Appendix.
}

\section{Stability setup}
We consider the nonlinear evolution equation
\begin{align}
    u_t &= \mathcal{N}(u, u_x, \ldots, u_{Nx}),\label{eqn:generalNLEE}
\end{align}
where $u(x,t)$ is a real- or complex-valued function (possibly vector valued)
and $\mathcal{N}$ is a nonlinear function of $u$ and $N$ of its spatial
derivatives. We assume that \eqref{eqn:generalNLEE} is written in a frame in
which there exists a nontrivial stationary solution, $u(x,t) = \bar u(x)$. In
order to study the stability of the solution $\bar u$, we linearize about it by
letting $u(x,t) = \bar u(x) + \eps v(x,t) + \0{\eps^2}$. Truncating at
$\0{\eps}$ yields
\begin{align}
  v_t = \mathcal{L}(\bar u, \bar u_x, \ldots, \bar u_{Nx})v,
\end{align}
where $\mathcal{L}$ is a linear function. Since $\mathcal{L}(\bar u, \bar u_x,
\ldots, \bar u_{Nx})$ is independent of $t$ we may separate variables with
\begin{align}
  v(x,t) = \hat v(x) e^{\lam t} \label{eqn:stabSeparate},
\end{align}
yielding the spectral problem for $\lam$:
\begin{align}
  \lam \hat v = \mathcal{L}(\bar u, \bar u_x,\ldots,\bar u_{Nx})\hat
  v.\label{eqn:stab}
\end{align}
\begin{definition}
    The \emph{stability spectrum} is the set
    \begin{align}
      \sigma_{\mathcal{L}} = \{ \lam \in \C : \hat v\in S_{\cL}\},
    \end{align}
    where $S_{\cL}$ is a function space \te{that depends on the perturbations
	of interest.} In essence, we choose $S_{\cL}$ to be the least restrictive space
	that works in our framework.
\end{definition}

We are interested in equations \eqref{eqn:generalNLEE} that are Hamiltonian.
The stability spectrum for such equations possesses a quadrafold symmetry: if
$\lam \in \sigma_{\cL}$, then $-\lam, \lam\cc, -\lam\cc \in \sigma_{\cL}$ (here
and throughout~$^*$~represents the complex conjugate)
\cite{haragus_kapitula-SpectraPeriodicWaves}. There are many different types of
stability. In this paper, whenever we refer to \emph{stability} we are using
the following definition of \emph{spectral stability}.
\begin{definition}
    A solution $u(x,t) = \bar u(x)$ of \eqref{eqn:generalNLEE} is \emph{stable}
    if and only the stability spectrum of the corresponding operator $\cL =
    \cL(\bar u, \bar u_x,\ldots, \bar u_{Nx})$ is a subset of the imaginary
    axis, $\sigma_{\cL}\subset i\R$. 
\end{definition}

Determining the stability spectrum is usually not a straightforward task.
Since we focus on \emph{integrable equations}, significant progress can be
made. We start with a special class of integrable equations known to be in the
\emph{AKNS hierarchy} \cite{AKNS74}.

\section{The AKNS hierarchy}\label{sec:AKNS}

\subsection{The Lax pair, the Lax spectrum, and the Squared-eigenfunction
connection}
We define an integrable equation to be an evolution equation of the form
\eqref{eqn:generalNLEE} that possesses a \emph{Lax pair}. A Lax pair is a
pair of ODEs of the form $\phi_x = X\phi,~\phi_t = T\phi$ where $X$ and $T$ are
operators acting on a function $\phi$. The integrable equation
\eqref{eqn:generalNLEE} is obtained by requiring that $\partial_t \phi_x =
\partial_x \phi_t$ holds. 

The AKNS hierarchy \cite{AKNS74} is a special class of integrable equations
containing a variety of physically important nonlinear evolution equations. The
Lax pair for members of the AKNS hierarchy is 
\begin{subequations}
\label{eqn:AKNSLaxPair}
\begin{align}
\label{eqn:AKNSLaxPairx}
\phi_x(x,t;\zeta) &= \begin{pmatrix*}
    -i\zeta & q(x,t) \\
    r(x,t) & i\zeta
  \end{pmatrix*} \phi(x,t;\zeta) = X\phi,\\
\label{eqn:AKNSLaxPairt}
\phi_t(x,t;\zeta) &= \begin{pmatrix*}[r]
    A(x,t;\zeta) & B(x,t;\zeta) \\
    C(x,t;\zeta) & -A(x,t;\zeta)
  \end{pmatrix*}\phi(x,t;\zeta) = T\phi.
\end{align}
\end{subequations}
Here $\zeta \in \C$ is called the \emph{Lax parameter}, assumed to be
independent of $x$ and $t$, and $r,~q,~A,~B,$ and $C$ are complex-valued
functions chosen such that the compatibility of mixed derivatives, $\partial_t
\phi_x = \partial_x \phi_t$ holds if and only if \eqref{eqn:generalNLEE} holds.
The compatibility condition defines the evolution equations~\cite{AKNS74}
\begin{subequations}\label{eqn:AKNStimeEvolution}
\begin{align}
  q_t &= B_x + 2 i \zeta B + 2 A q, \label{eqn:qEvolution}\\
  r_t &= C_x - 2 i \zeta C - 2 A r, \label{eqn:rEvolution}
\end{align}
\end{subequations}
as well as the condition
\begin{align}
  A_x &= qC - rB \label{eqn:AeqnAKNS}.
\end{align}
We are interested in studying the stability of stationary and periodic or
quasiperiodic solutions of members of the AKNS hierarchy with
\begin{align}
  r = \kappa q\cc,
  \label{eqn:involutions}
\end{align}
where $\kappa = \pm 1$. We make the following assumption.

\begin{assumption}\label{assumption:stationary}
    The functions $q$ and $r$ are related by \eqref{eqn:involutions}, are
    $t$-independent, and have the form
    \begin{align}
      q(x) = e^{i\theta(x)}Q(x), \quad \text{and}\quad r(x) = \kappa
      e^{-i\theta(x)}Q(x),\label{eqn:AKNSInvolution}
    \end{align}
    where $Q$ and $\theta$ are real-valued functions and $Q\geq 0$ is
    $P$-periodic.
\end{assumption}

\noindent With this assumption, \eqref{eqn:AKNStimeEvolution} gives
\begin{subequations}\label{eqn:AKNScompatibility}
\begin{align}
  B_x &= -2 i \zeta B - 2 A q, \label{eqn:BeqnAKNS}\\
  C_x &=  2 i \zeta C + 2 A r. \label{eqn:CeqnAKNS}
\end{align}
\end{subequations}

\begin{remark}\label{remark:recursionOp}
Many evolution equations are found by assuming $A,~B,$ and $C$ can be written as
a $\zeta$-power series \cite{SolitonsAndIST}. In such cases, a recursion
operator is used to find $A,~B,$ and $C$ \cite{SolitonsAndIST} and \cite[Chapter
2]{gerdjikov2008integrable}. Using the recursion operator, $A$ is a function of
products of $q_{nx}^j r_{nx}^j,$ where $n,~j \in \N$.  Similarly, $B$ is a
function of products of the form $q_{nx}^{j+1} r_{mx}^j$ and $C$ is a function
of products of the form $q_{nx}^j r_{mx}^{j+1}$. Therefore, when
Assumption~\ref{assumption:stationary} holds, $A,~B,$ and $C$ are
$t$-independent, bounded for $x\in \R$, including $\infty$, and $A$ is
$P$-periodic. In what follows we assume each of these statements are true.
\end{remark}

\begin{definition}
  The \emph{Lax spectrum} is the set
    \begin{align}
      \sigma_L = \{\zeta \in \C: \phi \in S_{L}\}
      \label{eqn:LaxSpectrumDef},
      \end{align}
      where $S_L$ is a \te{function space that depends on the perturbations of interest.}
It turns out that $S_L$ is related to the
      space $S_{\cL}$. 
\end{definition}

Of course, $\sigma_L$ depends on which norm is used. Although finding the Lax
spectrum is an interesting and important problem in its own right, we are
primarily interested in using the Lax spectrum to determine the stability of
stationary solutions to integrable equations. The connection between the Lax
spectrum and stability is through the so-called \emph{squared-eigenfunction
connection}. The squared-eigenfunction connection (sometimes referred to as
quadratic eigenfunctions) gives a connection involving quadratic combinations
of the eigenfunctions of \eqref{eqn:AKNSLaxPair} between eigenfunctions of
the stability problem \eqref{eqn:stab} and eigenfunctions of the Lax problem
\eqref{eqn:AKNSLaxPair} \cite{AKNS74,gerdjikov2008integrable}. In order to
connect the Lax spectrum with the stability spectrum, we make the following
observation (originally used in \cite{bottman2009kdv}). When $A,~B,$ and $C$
are $t$-independent (see Remark~\ref{remark:recursionOp})
\eqref{eqn:AKNSLaxPairt} may be solved by separation of variables. Equating
\begin{align}
  \phi(x,t) = e^{\Omega t}\varphi(x),\label{eqn:phiSepVar}
\end{align}
where $\Omega$ is complex valued and $\varphi(x)$ is a complex
vector-valued function, \eqref{eqn:AKNSLaxPairt} becomes a $2\times 2$
eigenvalue equation for $\Omega$:
\begin{align}
  \Omega\varphi = T\varphi.
\end{align}
Using the expression \eqref{eqn:AKNSLaxPairt} for $T$,
\begin{align}
  \Omega^2 = A^2 + BC.\label{eqn:Omega}
\end{align}
The following lemma establishes that $\Omega$ is not a function of $x$, in
addition to the obvious fact that it is not a function of $t$. 

\begin{lemma}\label{lem:OmegaIndep}
  Under Assumption~\ref{assumption:stationary}, $\Omega$ is independent of $x$
  and $t$.
\end{lemma}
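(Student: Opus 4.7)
The plan is to verify $t$-independence directly from the setup and then establish $x$-independence by a short computation that uses the three structural ODEs \eqref{eqn:AeqnAKNS}, \eqref{eqn:BeqnAKNS}, \eqref{eqn:CeqnAKNS}.

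For $t$-independence, I would simply invoke Assumption~\ref{assumption:stationary} and Remark~\ref{remark:recursionOp}: under the assumption, $q$ and $r$ are $t$-independent, and the remark states that $A$, $B$, $C$ are then $t$-independent as well. Consequently $A^2+BC$ has no $t$-dependence, and neither does $\Omega$ (up to the obvious choice of branch for the square root, which can be fixed once and for all).

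For $x$-independence, the idea is to differentiate $\Omega^2 = A^2 + BC$ with respect to $x$ and show the result vanishes identically. I would compute
\[
\partial_x(A^2 + BC) = 2A\,A_x + B_x\,C + B\,C_x,
\]
and then substitute $A_x = qC - rB$ from \eqref{eqn:AeqnAKNS}, together with $B_x = -2i\zeta B - 2Aq$ and $C_x = 2i\zeta C + 2Ar$ from \eqref{eqn:BeqnAKNS}--\eqref{eqn:CeqnAKNS}. The $2Aq C$ contributions from $A\,A_x$ and from $B_x\,C$ should cancel, as should the $2ArB$ contributions from $A\,A_x$ and from $B\,C_x$; the $\pm 2i\zeta BC$ terms likewise cancel. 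This gives $\partial_x(A^2+BC) = 0$, so $\Omega^2$ is an $x$-independent quantity and $\Omega$ is as well.

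There is no real obstacle here: the only subtlety worth noting is the sign/branch convention for $\Omega = \sqrt{A^2+BC}$, which is a constant function of $x$ on each connected component of the set where $A^2+BC\neq 0$; a global choice can be made consistently by the same argument that justifies picking $\Omega$ as the eigenvalue of the constant-coefficient $t$-evolution. Thus the entire proof reduces to one line of algebra after substituting the structural equations.
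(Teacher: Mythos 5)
Your proposal is correct and is essentially the paper's own argument: the paper likewise dispatches $t$-independence via Remark~\ref{remark:recursionOp} and then shows $\partial_x(A^2+BC)=0$ by combining \eqref{eqn:BeqnAKNS}, \eqref{eqn:CeqnAKNS} and \eqref{eqn:AeqnAKNS} (it multiplies the $B$ and $C$ equations by $C$ and $B$ and adds, which is the same cancellation you describe). Your remark about fixing the branch of $\Omega=\sqrt{A^2+BC}$ is a harmless extra observation that the paper leaves implicit.
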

\begin{proof}
  Independence of $t$ is immediate since $A,~B$ and $C$ are independent of $t$ for
  stationary solutions (Remark~\ref{remark:recursionOp}).  Multiplying
  \eqref{eqn:BeqnAKNS} by $C$ and \eqref{eqn:CeqnAKNS} by $B$ and adding the
  resulting equations yields
  \begin{align}
    0 &= \partial_x(BC) + 2A (q C - r B ).
  \end{align}
  Using \eqref{eqn:AeqnAKNS},
  \begin{align}
    0 &= \partial_x(BC + A^2) = \partial_x(\Omega^2),
  \end{align}
  so $\Omega^2$ is independent of $x$.
\end{proof}

Thus $\Omega = \Omega(\zeta)$ is a function only of $\zeta$ and the solution
parameters. The connection between the Lax spectrum and the stability spectrum
is through $\Omega$.  Comparing the exponential component of quadratic
combinations of the two components of $\phi$ with $v$ from
\eqref{eqn:stabSeparate} yields 
\begin{align}
  \lam = 2\Omega(\zeta) \label{eqn:SQEF}.
\end{align}
The squared-eigenfunction connection is known to be complete on the whole line
\cite{gerdjikov2008integrable}, but this has not yet been shown in other
settings. We have been able to show it is complete in every example that we
have studied in depth \cite{bottman2009kdv, bottman2011elliptic,
deconinck2010orbital, mkdvOrbitalStability, upsalNLS, kdvFGOrbitalStability}.
In such cases, \eqref{eqn:SQEF} gives the entire stability spectrum if the set
$\sigma_L$ is known. In other words, the map
$\Omega:\sigma_{L}\mapsto\sigma_{\cL}$ is surjective. This gives us a
connection between the function spaces defining the stability and the Lax
spectrum: both are defined by the spatial boundedness of the eigenfunctions in
question. The next step for studying stability is to find the Lax spectrum.
Before doing so, we provide an example of how the steps in this section work
for a well-known equation.

\begin{example*}[The nonlinear Schr\"{o}dinger (NLS) equation]
$ $\newline
The nonlinear Schr\"{o}dinger (NLS) equation is
  \begin{align}
    i \Psi_t + \frac12 \Psi_{xx} - \kappa\Psi\abs{\Psi}^2 &= 0,\label{eqn:NLS}
  \end{align}
  where $\Psi(x,t)$ is a complex-valued function and $\kappa = -1$ and $\kappa =
  1$ correspond to the focusing and defocusing equations respectively. The Lax pair
  for the NLS equation \cite{zakharov_shabat} is given by
  \eqref{eqn:AKNSLaxPair} with
  \begin{subequations}
    \begin{align}
     & q = \Psi,& & r = \kappa \Psi\cc,&
     &A = -i\zeta^2 - i\kappa \abs{\Psi}^2/2, &
     &B = \zeta \Psi + i\Psi_x/2,&
     &C = \zeta \kappa \Psi\cc - i\kappa \Psi\cc_x/2.&
    \end{align}
  \end{subequations}
  Note that $A,~B,$ and $C$ satisfy the properties in
  Remark~\ref{remark:recursionOp}.  Equating $\Psi(x,t) = e^{-i\omega
  t}\psi(x,t)$, where $\omega \in \R$ is constant, we obtain the NLS equation in
  a frame rotating with constant phase speed $\omega$,
  \begin{align}
    i \psi_t + \omega \psi + \frac12 \psi_{xx} - \kappa \psi\abs{\psi}^2
    &=0.\label{eqn:NLSTraveling}
  \end{align}
  Equation \eqref{eqn:NLSTraveling} can be obtained from the compatibility
  of the new $t$-equation,
  \begin{align}
    \phi_t &= \begin{pmatrix} -i\zeta^2 -i\kappa \abs{\psi}^2/2 +  i \omega/2 &
      \zeta\psi + i\psi_x/2 \\
      \zeta \kappa \psi\cc - i \kappa \psi_x\cc/2 &
      i\zeta^2 + i\kappa  \abs{\psi}^2/2 - i \omega/2
    \end{pmatrix}\phi,
    \label{eqn:NLSnewTeqn}
  \end{align}
  and the $x$ equation \eqref{eqn:AKNSLaxPairx}, which is unchanged. 

  The (quasi)periodic stationary solutions of \eqref{eqn:NLSTraveling} are called
  the \emph{elliptic solutions}. The stability of the elliptic solutions of the
  defocusing and focusing NLS equation was studied in \cite{bottman2011elliptic}
  and \cite{deconinck2017stability,upsalNLS} respectively. To do so we linearize
  \eqref{eqn:NLSTraveling} about a stationary solution $\tilde\psi(x)$ by
  letting $\psi(x,t) = \tilde \psi(x) + \eps u(x,t)+{\cal O}(\eps^2)$. This
  results in
  \begin{align}
    U_t = \begin{pmatrix} u \\[0.5em] \kappa u\cc\end{pmatrix}_t &= \begin{pmatrix}
      \frac i2 \partial_x^2 - 2i\kappa |\tilde\psi|^2 + i\omega & -i
      \tilde \psi^2 \\[0.5em]
      i\tilde\psi^{\ast 2} & -\frac i2 \partial_x^2 + 2i\kappa |\tilde
      \psi|^2 -i\omega
    \end{pmatrix}
    \begin{pmatrix}
      u \\[0.5em] \kappa u\cc
    \end{pmatrix}
    = \cL_{\text{NLS}}U.
  \end{align}
  Since $\cL_{\text{NLS}}$ does not depend explicitly on $t$, we separate variables with
  \begin{align}
    U(x,t) = \begin{pmatrix} u(x,t) \\ \kappa u\cc(x,t)\end{pmatrix}= e^{\lam
      t}\begin{pmatrix}v(x)\\\kappa v\cc(x)\end{pmatrix} = e^{\lam t}V(x),
  \end{align}
  resulting in the spectral problem
  \begin{align}
    \lam V = \cL_{\text{NLS}}V. \label{eqn:spectral}
  \end{align}
  The squared-eigenfunction connection for the NLS equation
  \cite{bottman2011elliptic} gives
  \begin{align}
    U(x,t) &= \begin{pmatrix} \phi_1 ^2\\[0.5em]
    \phi_2^2\end{pmatrix},\label{eqn:NLSSQEF}
  \end{align}
  where $\phi = (\phi_1,~\phi_2)^\intercal$ is an eigenfunction of
  \eqref{eqn:NLSnewTeqn}. Using \eqref{eqn:phiSepVar}, the eigenfunctions of
  $\cL_{\text{NLS}}$ are given by
  \begin{align}
    U(x,t) &= e^{\lam t}V(x) = \begin{pmatrix}\phi_1^2
    \\[0.5em]\phi_2^2\end{pmatrix} = e^{2\Omega t}
    \begin{pmatrix}\varphi_1^2(x)\\[0.5em]
    \varphi_2^2(x)\end{pmatrix}, \label{eqn:NLSSQEFConnection}
  \end{align}
  hence $\lam = 2\Omega(\zeta)$ \eqref{eqn:SQEF}. The map $\Omega:\sigma_L\to
  \sigma_{\cL_{\text{NLS}}}$ is shown to be surjective in
  \cite{bottman2011elliptic, upsalNLS}.

\end{example*}
\subsection{Finding the Lax spectrum}\label{sec:FindingAKNSLaxSpectrum}
We begin by introducing the isospectral transformation
\begin{align}
  \Phi(x,t) &= \begin{pmatrix} \Phi_1\\ \Phi_2\end{pmatrix} = \begin{pmatrix}
    e^{-i\theta/2}\phi_1 \\[0.5em]
    e^{i\theta/2}\phi_2\end{pmatrix},\label{eqn:Isospectral}
\end{align}
by which the Lax pair \eqref{eqn:AKNSLaxPair} becomes
\begin{subequations}
\begin{align}
  \Phi_x &= \begin{pmatrix} \alpha & Q(x)\\ \kappa Q(x) & -\alpha
  \end{pmatrix}\Phi,\label{eqn:AKNSLaxPairPeriodicx}\\
  \Phi_t &= \begin{pmatrix} A & e^{-i\theta}
    B \\ e^{i\theta} C & -A\end{pmatrix}\Phi = \begin{pmatrix} A & \hat B \\
    \hat C & -A\end{pmatrix}\Phi,\label{eqn:AKNSLaxPairPeriodict}
\end{align}
\label{eqn:AKNSLaxPairPeriodic}
\end{subequations}
where
\begin{align}
  &\alpha = -i\zeta - i\theta_x/2,& &\hat{B} = e^{-i\theta }B, & &\hat C =
  e^{i\theta}C.\label{eqn:AKNSHats}
\end{align}
This form is helpful since, by Remark~\ref{remark:recursionOp}, $\hat B$ and
$\hat C$ are $P$-periodic along with $A$. The compatibility conditions
\eqref{eqn:AeqnAKNS} and \eqref{eqn:AKNScompatibility} become
\begin{align}
  &A_x = Q \hat C - \kappa Q \hat B, & &\hat B_x = 2(\alpha \hat B - A Q), & &
  \hat C_x = -2(\alpha \hat C - \kappa A  Q).&\label{eqn:AKNScompatibilityHat}
\end{align}

To find $\sigma_L$, we find the eigenfunctions of \eqref{eqn:AKNSLaxPair} using
a technique first used in \cite{bottman2011elliptic}. We note that the
eigenfunctions can be found using other techniques, see \emph{e.g.,}
\cite{chen2018rogueNLS, chen2018rogue, chen2019rogue}. From
\eqref{eqn:AKNSLaxPairPeriodict}, we see that the eigenfunctions are of the
form
\begin{align}
  &\Phi(x,t) = e^{\Omega t}y_1(x)
  \begin{pmatrix} -\hat B(x) \\ A(x)-\Omega \end{pmatrix},&
    &\text{or}&
    &\Phi(x,t)= e^{\Omega t}y_2(x)
  \begin{pmatrix} A(x)+\Omega \\ \hat C(x)\end{pmatrix}.&
  \label{eqn:theAKNSEigenfunctions}
\end{align}
Here the scalar functions $y_1(x)$ and $y_2(x)$ are determined by the
requirement that $\Phi(x,t)$ not only solves \eqref{eqn:AKNSLaxPairPeriodict},
but also \eqref{eqn:AKNSLaxPairPeriodicx}, since
(\ref{eqn:AKNSLaxPairPeriodic}a-b) have a common set of eigenfunctions. We will
use the first equation of \eqref{eqn:theAKNSEigenfunctions}, but both
representations can be helpful (see
Section~\ref{sec:ComputingLaxSpectrumGeneral} for more on this). Substitution
in
\eqref{eqn:AKNSLaxPairPeriodicx} gives
\begin{align}
     &-\hat B y_1' - \hat B_x y_1 = (-\alpha \hat B + Q(A-\Omega))y_1,& &(A-\Omega)y_1' + A_x
     y_1' = (-\kappa Q \hat B - \alpha(A-\Omega))y_1,
\end{align}
so that different (but equivalent) representations for $y_1(x)$ are obtained
from the first or second equation of~\eqref{eqn:AKNSLaxPairx}:
\begin{align}
  &y_1 = \hat y_1 \exp\lt(\int\frac{\alpha \hat B - Q (A-\Omega)-\hat
  B_x}{\hat B}~\d x\rt),&
  &y_1 = \tilde y_1 \exp\lt(-\int \frac{\kappa Q \hat B +\alpha (A-\Omega) +
  A_x}{A-\Omega}~\d x\rt),&\label{eqn:AKNSEigenfunctions}
\end{align}
where $\hat y_1$ and $\tilde y_1$ are constants of integration. At this point
we define $S_L$, the function space that defines $\sigma_L$. Physically, the
eigenfunctions $\phi$ should be bounded for $\overline{\R} = \R\cup\{\infty\}$.
Therefore we define $S_L$ to be the space where $\phi$ are bounded for
$x\in\overline{\R}$. Since $A$ and $B$ are bounded for $x\in \overline{\R}$
(Remark~\ref{remark:recursionOp}), $y_1(x)$ must be bounded for $x\in
\overline{\R}$ in order for $\phi(x,t)$ to be an eigenfunction. We use the
second expression of \eqref{eqn:AKNSEigenfunctions}.  Similar work is done for
the first expression of \eqref{eqn:AKNSEigenfunctions} in
Section~\ref{sec:ComputingLaxSpectrumGeneral}.  To bound the exponential
growth, we consider the real part of the exponential.  Therefore we need the
indefinite integral 
\begin{align}
  I &= \RE \int\lt( \frac{\kappa Q\hat B}{A-\Omega} +\alpha + \D{x}\log(A-\Omega)\rt)~\d
  x,\label{eqn:AKNSIntegralConditions}
\end{align}
to be bounded for $x\in\overline{\R}$. 
By Remark~\ref{remark:recursionOp} the integrand in $I$ is $P$-periodic, so it
suffices to examine the average over one period. Therefore we need
\begin{align}
  J &= \RE \Langle \alpha + \frac{\kappa Q
  \hat B}{A-\Omega} + \D{x}\log(A-\Omega)\Rangle = 0,
\label{eqn:AKNSIntegralConditionPeriodic}
\end{align}
where $\langle \cdot\rangle = \frac1P \int_0^P \cdot ~\d x$ is
the average over a period. Since $A-\Omega$ is $P$-periodic, the logarithmic
derivative has no contribution to $J$. \te{Therefore $\zeta \in \sigma_L$ if
and only if
\begin{align}
J = \RE\Langle \alpha + \frac{\kappa Q \hat B}{A-\Omega} \Rangle = 0.\label{eqn:AKNSLaxConditionNonreal}
\end{align}
\begin{remark}\label{remark:asymptotics}
The condition \eqref{eqn:AKNSLaxConditionNonreal} is a necessary and sufficient
condition for $\zeta \in \sigma_L$. In general, this condition is nontrivial to
work with. However, it does allow one to characterize large subsets of the Lax spectrum
with relative ease. In particular, the large $\zeta$ asymptotic analysis of
\eqref{eqn:AKNSLaxConditionNonreal} is tractable, in that we can find an
asymptotic approximation to the unbounded components of the Lax spectrum, in
anology to the asymptotic analysis used to find the essential spectrum for
decaying solutions on the real line \cite{kapitula_promislow}. Below we
examine \eqref{eqn:AKNSLaxConditionNonreal} for $\zeta \in \R$. 
\end{remark}
}
When $\zeta \in \R$, $\alpha \in i\R$ so it also has no contribution to $J$.
Therefore when $\zeta \in \R$, we need
\begin{align}
   \RE \Langle \frac{\kappa Q \hat B}{A-\Omega}\Rangle=0.
  \label{eqn:AKNSLaxCondition}
\end{align}
When using the recursion operator, $A(\R)\subset i\R$ since $A(\zeta)$ is
defined by a power series in $\zeta$ with imaginary coefficients (see
Remark~\ref{remark:recursionOp}). We have the following lemma for $\zeta\in
\R$. 

\begin{lemma}\label{lem:AKNSCB}
  Consider a member of the AKNS hierarchy with $r = \kappa q \cc$, where
  $\kappa = \pm 1$, and $q_t = r_t = 0$. Then $C = \kappa B\cc$ when
  $\zeta \in\R$ and $A(\zeta)\in i\R$, where $A,~B,$ and $C$ are defined in
  \eqref{eqn:AKNSLaxPairt}.
\end{lemma}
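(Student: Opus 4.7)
The plan is to use the compatibility equations \eqref{eqn:AKNScompatibility} together with complex conjugation to show that $C$ and $\kappa B^*$ satisfy the same first-order linear ODE in $x$, then to exploit the $\zeta$-power-series structure from Remark~\ref{remark:recursionOp} to kill the homogeneous mode of their difference.

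First, I conjugate \eqref{eqn:BeqnAKNS}, that is, $B_x = -2i\zeta B - 2Aq$. The hypotheses $\zeta\in\R$ (so $\zeta\cc=\zeta$) and $A\in i\R$ (so $A\cc=-A$), together with the reduction $r=\kappa q\cc$ (equivalently $q\cc=\kappa r$, since $\kappa^2=1$), give
\begin{align*}
B_x\cc = 2i\zeta B\cc - 2A\cc q\cc = 2i\zeta B\cc + 2A\kappa r.
\end{align*}
Multiplying through by $\kappa$ and using $\kappa^2=1$ once more yields
\begin{align*}
(\kappa B\cc)_x = 2i\zeta(\kappa B\cc) + 2Ar,
\end{align*}
which is precisely \eqref{eqn:CeqnAKNS}. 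Hence $C$ and $\kappa B\cc$ satisfy the same inhomogeneous first-order linear ODE in $x$, and their difference $D(x):=C(x)-\kappa B\cc(x)$ obeys the homogeneous equation $D_x=2i\zeta D$, with general solution $D(x)=D(0)\,e^{2i\zeta x}$.

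The remaining task, which I expect to be the main obstacle, is to show that $D(0)=0$. Because $|e^{2i\zeta x}|=1$ for $\zeta\in\R$, boundedness in $x$ by itself cannot eliminate this homogeneous mode. Instead, I appeal to Remark~\ref{remark:recursionOp}: the entries $B$ and $C$ produced by the AKNS recursion operator are (Laurent) polynomials in $\zeta$ of some bounded total degree, with $x$-dependent coefficients built from $q$, $r$, and their spatial derivatives. Thus $D(x,\zeta)$ is a polynomial in $\zeta$ with highest exponent $N<\infty$. Writing $D(x,\zeta)=\sum_{n\le N}d_n(x)\zeta^n$ and matching powers of $\zeta$ in the identity $D_x=2i\zeta D$ yields the recursion $d_n'(x)=2i\,d_{n-1}(x)$ together with the top-order balance $0=2i\,d_N(x)$. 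Hence $d_N\equiv 0$, and the recursion cascades downward to force $d_n\equiv 0$ for every $n\le N$. Therefore $D\equiv 0$ and $C=\kappa B\cc$, as claimed.
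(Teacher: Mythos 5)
Your first two steps are correct and give a genuinely different route from the paper: conjugating \eqref{eqn:BeqnAKNS} under $\zeta\in\R$, $A\cc=-A$, $q\cc=\kappa r$ does show that $\kappa B\cc$ solves \eqref{eqn:CeqnAKNS}, so $D=C-\kappa B\cc$ satisfies $D_x=2i\zeta D$. The paper instead works with the periodic quantities $\hat B=e^{-i\theta}B$, $\hat C=e^{i\theta}C$ and argues pointwise in $\zeta$: taking real parts of \eqref{eqn:AKNScompatibilityHat} gives $\RE(\hat C-\kappa\hat B)=0$ from $\RE(A_x)=0$, and $\RE[-2\alpha(\hat C+\kappa\hat B)]=0$ forces $\IM(\hat C+\kappa\hat B)=0$ wherever $\alpha\neq 0$ (with a short ODE/continuity argument where $\alpha=0$), whence $\hat C=\kappa\hat B\cc$ and so $C=\kappa B\cc$.

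The gap is in your final step. The identity $D_x=2i\zeta D$ is only established for those $\zeta$ satisfying the lemma's hypotheses, namely $\zeta\in\R$ with $A(\zeta)\in i\R$. Matching coefficients of powers of $\zeta$ in a (Laurent) polynomial identity is legitimate only if the identity holds for more values of $\zeta$ than the degree involved. The lemma, however, is a pointwise statement in $\zeta$, and it is invoked in Theorem~\ref{thm:AKNSThm} for each $\zeta$ in the set $\cQ=\{\zeta\in\R: A(\zeta)\in i\R\}$, which the theorem does not assume to be infinite. Your argument therefore proves the lemma only under two extra hypotheses: that $B$ and $C$ are Laurent polynomials in $\zeta$ (a feature of the recursion-operator construction in Remark~\ref{remark:recursionOp}, not part of the lemma's statement), and that the hypothesis set $\cQ$ is large enough for polynomial identification. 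Both hold in the paper's examples, where $A(\R)\subset i\R$ and $\cQ=\R$, so your proof covers the cases of practical interest; but as a proof of the lemma as stated it does not close. Note also that at a single admissible $\zeta_0$ your conclusion $D(x)=D(0)e^{2i\zeta_0 x}$ cannot be improved by boundedness, since $|e^{2i\zeta_0 x}|=1$ for real $\zeta_0$ — you correctly identified this as the obstacle, but the polynomial cascade does not remove it pointwise. To repair the argument at fixed $\zeta$ you would need an additional input, such as the real/imaginary-part bookkeeping the paper uses, or a periodicity argument pinning down $D(0)=0$.
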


\begin{proof}
  We first establish that $\hat C  = \kappa \hat B \cc$
  \eqref{eqn:AKNSHats} from which it follows that $C = \kappa B\cc$. From
  \eqref{eqn:AKNScompatibilityHat},
  \begin{align}
    0 &= \RE(A_x) = Q \RE(\hat C - \kappa \hat B),\label{eqn:AKNSAx}
  \end{align}
  and
  \begin{align}
    0 &= \RE(\hat C_x - \kappa \hat B_x) = \RE[-2\alpha(\hat C + \kappa \hat B)
    + 4\kappa Q A] = \RE[-2\alpha(\hat C + \kappa \hat B)],\label{eqn:AKNSCxBx}
  \end{align}
  since $\kappa Q A \in i\R$ when $\zeta \in \R$, by assumption. If
  $\alpha(\zeta)\neq 0$, $\IM(\hat C) = -\kappa \IM(\hat B)$ and
  \eqref{eqn:AKNSAx} implies that $\hat C = \kappa \hat B\cc$. If $\alpha = 0$,
  we have
  \begin{align}
    \partial_x(\hat C + \kappa \hat B) = \hat C_x + \kappa \hat B_x = 0.
  \end{align}
  Since $\IM(\hat C + \kappa \hat B)=0$ at $x$ with $\alpha \neq 0$, it must be
  zero for all $x$. Therefore $\hat C = \kappa \hat B\cc$ for all $x$. 
\end{proof}

We can now identify part of the Lax spectrum with imaginary (stable) elements of
the stability spectrum.

\begin{theorem}\label{thm:AKNSThm}
    Consider a member of the AKNS hierarchy \eqref{eqn:AKNSLaxPair} satisfying
    Assumption~\ref{assumption:stationary} and Remark~\ref{remark:recursionOp}.
    Assume further that $A,~\hat B,$ and $\hat C$ from
    \eqref{eqn:AKNSLaxPairPeriodic} are $P$-periodic. Let $\cQ = \{\zeta \in \R:
    A(\zeta)\in i\R\},~\cS = \{\zeta \in \R: \Omega(\zeta)\in i\R\}$ ($\cS$ for
    stable) and $\cU = \{\zeta\in\R: \Omega(\zeta)\in \R\}$ ($\cU$ for
    unstable). 
    
    When $\kappa = -1$, $\cQ \subset \cS$. When $\kappa = 1$, $\cQ
    \subset\cS\cup \cU$. Also, $\cQ\cap \cS\subset\sigma_L$, and if $\Omega$
    is shown to be surjective, $\Omega(\cQ\cap\cS)\subset\sigma_{\cL}\cap i\R$
    for $\kappa = \pm 1$.  In other words, all real $\zeta$ for which
    $\Omega(\zeta),A(\zeta)\in i\R$ are part of the Lax spectrum and map to
    stable elements of the stability spectrum.

\end{theorem}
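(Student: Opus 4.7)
The plan is to separate the theorem into two steps: step (i) a pointwise-in-$x$ algebraic classification of $\Omega$ on $\cQ$ using the identity $\Omega^2 = A^2 + BC$ from \eqref{eqn:Omega}, and step (ii) a verification that the Lax spectrum condition \eqref{eqn:AKNSLaxConditionNonreal} is automatically satisfied on $\cQ\cap\cS$. The stability conclusion then follows immediately from the squared-eigenfunction connection \eqref{eqn:SQEF} together with the surjectivity hypothesis on $\Omega$.

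For step (i), I would invoke Lemma \ref{lem:AKNSCB}: on $\cQ$ one has $\hat C = \kappa \hat B^{*}$, so $BC = \hat B\hat C = \kappa|\hat B|^2$, and hence
\begin{align*}
\Omega^2 \;=\; A^2 + \kappa|\hat B|^2.
\end{align*}
For $\zeta \in \cQ$, $A \in i\R$ forces $A^2 \le 0$. When $\kappa = -1$ both terms on the right are nonpositive, so $\Omega^2 \le 0$ and $\Omega \in i\R$, i.e. $\cQ \subset \cS$. When $\kappa = +1$, $\Omega^2$ is real but of indefinite sign, so $\Omega$ is either purely imaginary or purely real, giving $\cQ \subset \cS \cup \cU$.

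For step (ii), I verify \eqref{eqn:AKNSLaxConditionNonreal} on $\cQ \cap \cS$. Since $\zeta \in \R$ makes $\alpha = -i\zeta - i\theta_x/2$ purely imaginary, the $\alpha$-contribution to $J$ drops out. The key move is to substitute $\hat C = \kappa \hat B^{*}$ into the first relation of \eqref{eqn:AKNScompatibilityHat}, which gives $A_x = Q(\hat C - \kappa \hat B) = -2i\kappa Q\,\IM(\hat B)$. With $A,\Omega\in i\R$ the denominator $A-\Omega$ is purely imaginary, so a short direct computation produces the pointwise identity
\begin{align*}
\RE\!\left(\frac{\kappa Q\,\hat B}{A-\Omega}\right) \;=\; -\tfrac12\,\RE\!\left(\frac{A_x}{A-\Omega}\right) \;=\; -\tfrac12\,\partial_x \log|A-\Omega|.
\end{align*}
Since $A$ is $P$-periodic by Remark \ref{remark:recursionOp}, so is $\log|A-\Omega|$, and the period average of its $x$-derivative is zero. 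Thus $J=0$, so $\cQ\cap\cS\subset\sigma_L$, and surjectivity of $\Omega$ together with \eqref{eqn:SQEF} gives $\Omega(\cQ\cap\cS)\subset i\R\cap\sigma_{\cL}$.

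The hardest part will be handling the possible isolated zeros of $A-\Omega$ on $\R$: at such points $\log|A-\Omega|$ has a logarithmic singularity whereas the original integrand $\kappa Q\hat B/(A-\Omega)$ is at worst a simple pole whose real part integrates to zero in the principal-value sense (this can be checked from the local expansions $\hat B \sim c_1(x-x_0)$ and $A-\Omega\sim c_2(x-x_0)^2$ forced by $A_x = -2i\kappa Q\IM\hat B$). I would dispatch this either by arguing such coincidences are non-generic in $\zeta$ and extending the conclusion by continuity of $\sigma_L$ in $\zeta$, or by working throughout with the bounded integrand $\kappa Q\hat B/(A-\Omega)$ and only passing to the log-derivative form away from its zero set.
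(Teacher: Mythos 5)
Your proposal is correct and follows essentially the same route as the paper's proof: step (i) is the paper's argument verbatim (Lemma~\ref{lem:AKNSCB} plus $\Omega^2=A^2+\kappa|\hat B|^2$), and step (ii) reproduces the paper's key computation showing $\RE\bigl(\kappa Q\hat B/(A-\Omega)\bigr)$ is an exact derivative of a periodic quantity, whence its period average vanishes. The only addition is your explicit treatment of possible zeros of $A-\Omega$, a point the paper leaves implicit (it can be handled, as you suggest, by switching to the other eigenfunction representation in \eqref{eqn:AKNSEigenfunctions} near such points); this is a reasonable refinement but not a different proof.
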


\begin{proof}
   Let $\zeta \in \cQ$. By Lemma~\ref{lem:AKNSCB}
  \begin{align}
    \Omega^2(\zeta) = A^2 + \kappa \abs{B}^2 \in \R,
  \end{align}
  so $\cQ\subset \cS \cup \cU$. When $\kappa = -1$, $\Omega^2(\zeta)<0$ so
  $\cQ\subset \cS$. At this point, we are not assuming that $\zeta \in
  \sigma_L$. We show that this is the case by showing that $\tilde I_1 = 0$
  \eqref{eqn:AKNSLaxCondition}.  If $\Omega(\zeta)\in i\R$, then
  \begin{align}
   \begin{split}
    \RE \frac{Q\hat B}{A-\Omega} &= \frac12 \left(\frac{Q \hat
    B}{A-\Omega} + \frac{Q \hat B\cc}{A\cc - \Omega\cc}\rt)\\
    &= \frac12\left(\frac{Q\hat B - \kappa \hat C}{A-\Omega}\right) =
    -\frac{A_x}{2(A-\Omega)} = -\frac12\D{x}\log(A-\Omega).
   \end{split}
  \end{align}
  Since $A$ is assumed to be periodic, $\RE\tilde I_1 = 0$, establishing that
  $\zeta \in \sigma_L$ and that $\cQ\cap \cS \subset\sigma_L$. By definition,
  $\Omega(\cQ\cap \cS)\subset i\R\cap \sigma_{\cL}$ when $\Omega$ is surjective
  onto $\sigma_{\cL}$. 
\end{proof}

  \begin{remark}
    When using the recursion operator, $A(\R)\subset i\R$ (see statement just
    before Lemma~\ref{lem:AKNSCB}). Therefore when $\kappa = -1$,
    Theorem~\ref{thm:AKNSThm} implies that $\R \subset \sigma_L$ and
    $\Omega(\R)\subset i\R$ for all members of the AKNS hierarchy satisfying
    Assumption~\ref{assumption:stationary} and where
    Remark~\ref{remark:recursionOp} applies.

    When $\kappa = 1$, the spectral problem \eqref{eqn:AKNSLaxPair} is self
    adjoint so $\sigma_L\subset \R$. Therefore the assumption that $\zeta \in \R$
    is not actually an assumption for such problems. Theorem~\ref{thm:AKNSThm}
    establishes that all $\{\zeta \in \R: \Omega(\zeta)\in i\R\} \subset \sigma_L$.
    If one can establish that $\{\zeta \in \R:\Omega(\zeta)\in
    \R\setminus\{0\}\}\not\subset\sigma_L$ for a specific problem, then one has
    shown that the underlying solution is stable.
  \end{remark}

\begin{example*}[The NLS equation, continued]
   $ $ \newline
   \te{We begin by demonstrating the statement of Remark \ref{remark:asymptotics}. The
   condition \eqref{eqn:AKNSLaxConditionNonreal} can be studied asymptotically for
   large $\zeta$. For both the defocusing and the focusing NLS equations,
   $\Omega \sim i\zeta^2$ for large $\zeta$.  Since $A(\R)\subset
   i\R$, $A-\Omega \in i\R$ if $\zeta \in \mathbb{R}$. Further, $\alpha(\R) \in
   i\R$ and $\kappa Q \hat B \sim \kappa Q^2 \zeta \in \R$ for real
   $\zeta$. Therefore for large real $\zeta$, 
   \begin{align} 
      \alpha + \frac{\kappa Q \hat B}{A-\Omega} \in i\R
   \end{align}
   implying that $\zeta \in \sigma_L$. \te{This gives two unbounded components of
   $\sigma_L$: one as $\zeta \to \infty$ and another as $\zeta \to -\infty$.  Next
   we demonstrate that these two are the only unbounded components of $\sigma_L$.}
   For large $\zeta$, the $x$-equation of the Lax pair \eqref{eqn:AKNSLaxPairx}
   becomes diagonal, giving
   \begin{align}
   \phi \sim \begin{pmatrix} e^{-i\zeta x}\\ e^{i\zeta x}\end{pmatrix}.
   \end{align}
   The only large $\zeta$ which leaves $\phi$ bounded for all $x\in \R$ is
   $\zeta\in \R$. \te{Therefore the two unbounded components asymptotic to the
   real line as $\zeta \to \pm\infty$ found above are the only two unbounded
   components of the Lax spectrum.}
   }

   The defocusing NLS equation has $q = r\cc$ so the Lax pair is self adjoint
   and $\sigma_L \subset \R$. In \cite{bottman2011elliptic} the authors
   establish that $\{\zeta\in\R:\Omega(\zeta)\in \R\setminus\{0\}\}
   \not\subset\sigma_L$ by working with \eqref{eqn:AKNSLaxCondition} directly.
   Since $\sigma_L  = \{\zeta \in \R: \Omega(\zeta)\in i\R\}$ and
   $\Omega:\sigma_L\to\sigma_{\cL_{\text{NLS}}}$ is surjective, the elliptic
   solutions are stable (see Figure \ref{fig:defocNLS}). 

   The situation for the elliptic solutions of the focusing NLS equation is more
   complicated since the Lax pair is not self adjoint and $\sigma_L$ is not
   a subset of the real line.  To find the rest of $\sigma_L$, we work with
   \eqref{eqn:AKNSIntegralConditionPeriodic} by computing the integral in terms
   of elliptic functions and working with the result, but it may also be found
   using the Floquet Discriminant (see Appendix~\ref{sec:FloquetDiscriminant}).
   It was originally shown in \cite{deconinck2017stability} that $\R\subset
   \sigma_L$ by working with the complicated expression
   for~\eqref{eqn:AKNSIntegralConditionPeriodic} directly.  In \cite{upsalNLS}
   we establish that the set $\cE = \R \cup \{\zeta\in
   \C:\Omega(\zeta)=0\}$ are the only elements in $\sigma_L$ that map to
   $i\R\subset\sigma_{\cL}$ under $\Omega$. Everything else in the spectrum maps
   to unstable modes, hence the solutions are, in general, unstable. Special
   classes of perturbations exist, the \emph{subharmonic perturbations}, for which
   only members of $\cE$ are excited. Some of the elliptic
   solutions are stable with respect to this class of perturbations
   \cite{upsalNLS}. This demonstrates the power of what is established here: since
   we know what maps to stable elements of the stability spectrum, we can find the
   class of perturbations for which the solution is stable. See Figure
   \ref{fig:focNLS} for an example.

  \begin{figure}
    \begin{subfigure}[b]{0.42\textwidth}
    \begin{tikzpicture}
       \node[inner sep=0pt] (defocLax) at (0,0)
       {\includegraphics[width=0.42\linewidth]{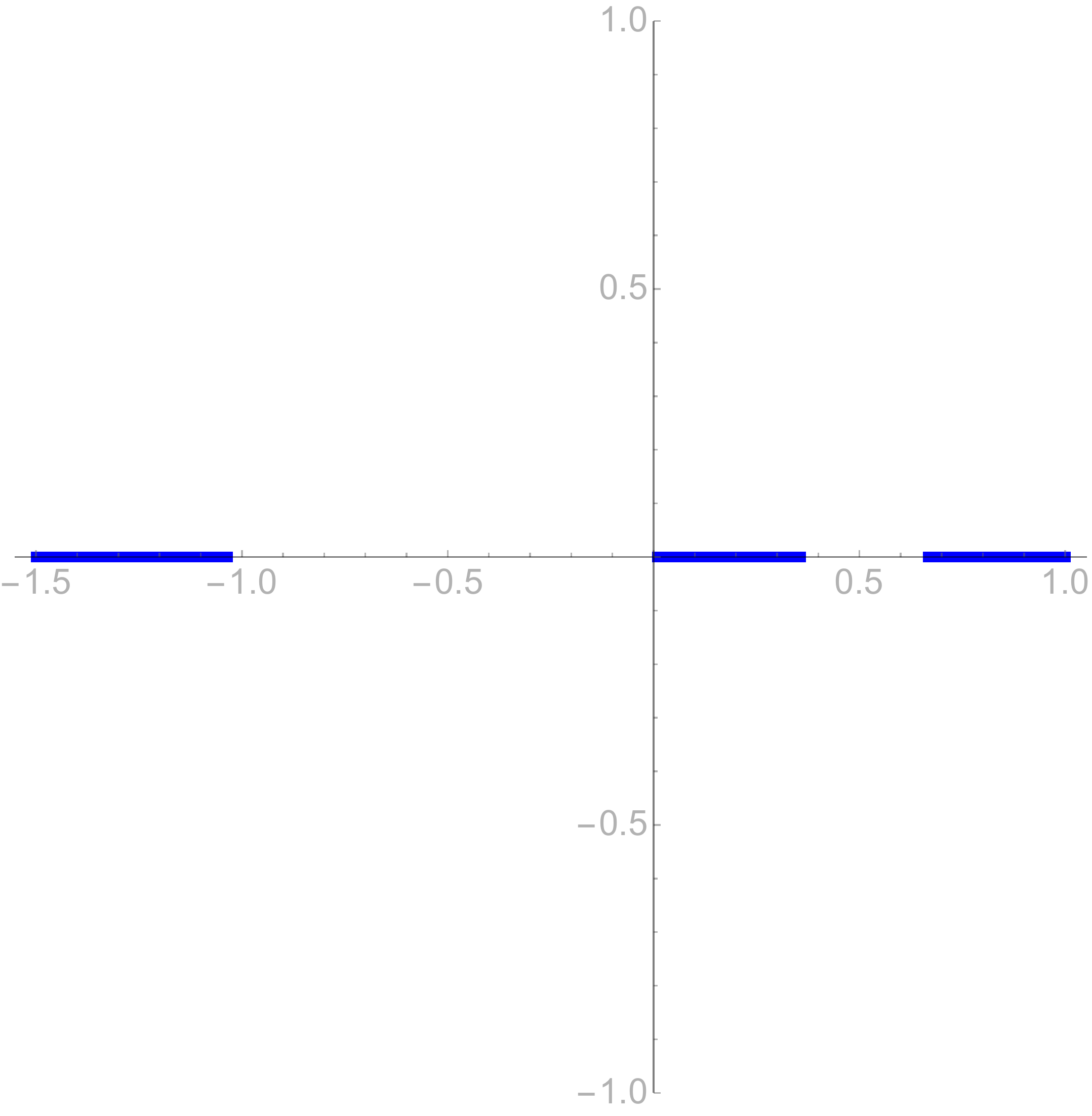}};
       \node[inner sep=0pt] (defocStab) at (4,0)
       {\includegraphics[width=0.42\linewidth]{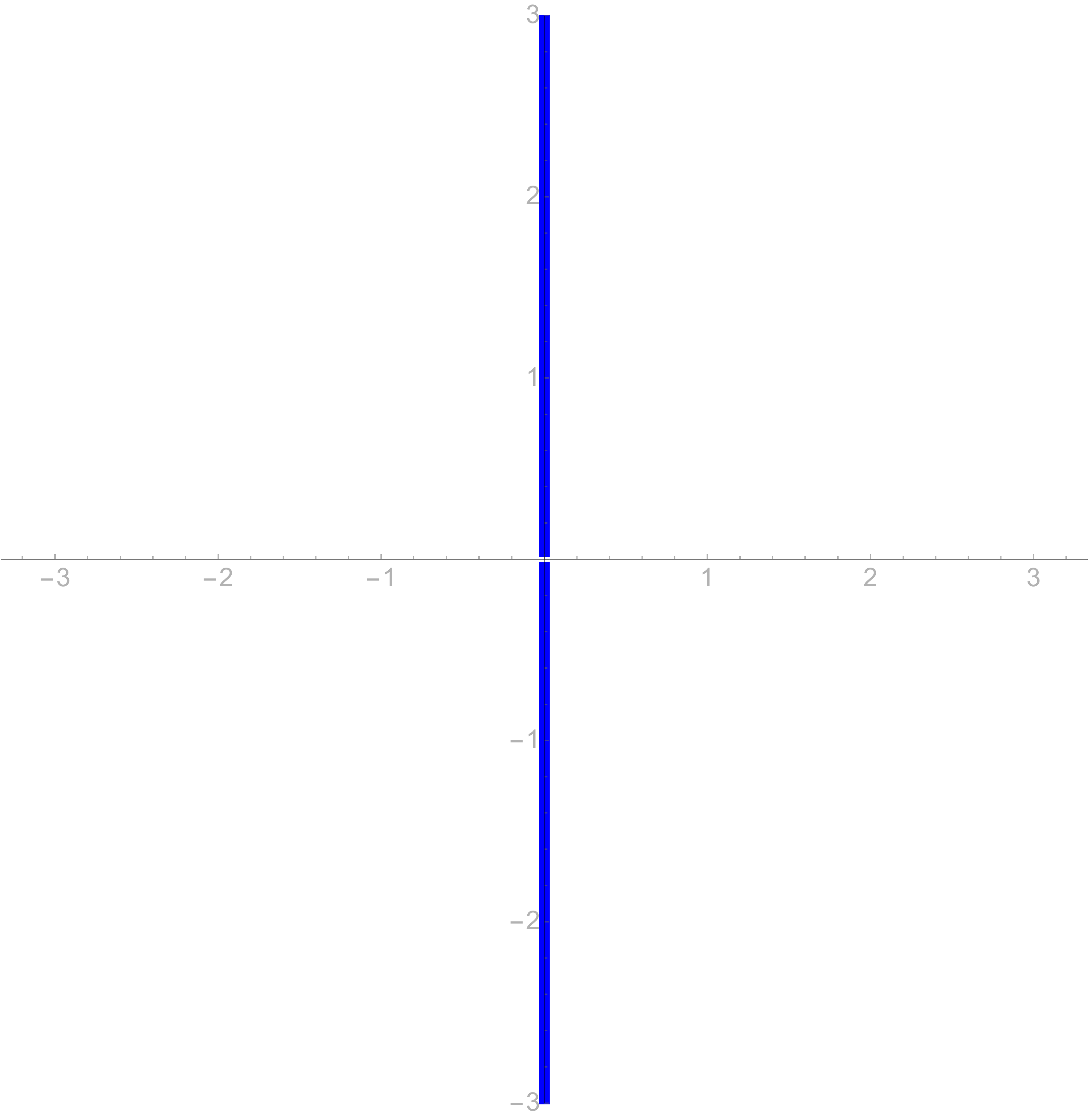}};
       \node (Lax) at (0.9,0.7) {$\sigma_{{\scriptscriptstyle L}}$};
       \node (Stab) at (3.3,0.7)
       {$\sigma_{\scriptstyle{\cL_{\text{\tiny{NLS}}}}}$};
       \draw [->] (Lax) to [out = 30, in=155] (Stab);
       \node at (2.0, 1.5) {$\Omega$};
    \end{tikzpicture}
      \caption{\label{fig:defocNLS}}
    \end{subfigure}
    \hspace{0.14\textwidth}
    \begin{subfigure}[b]{0.42\textwidth}
    \begin{tikzpicture}
       \node[inner sep=0pt] (focLax) at (0,0)
       {\includegraphics[width=0.4\linewidth]{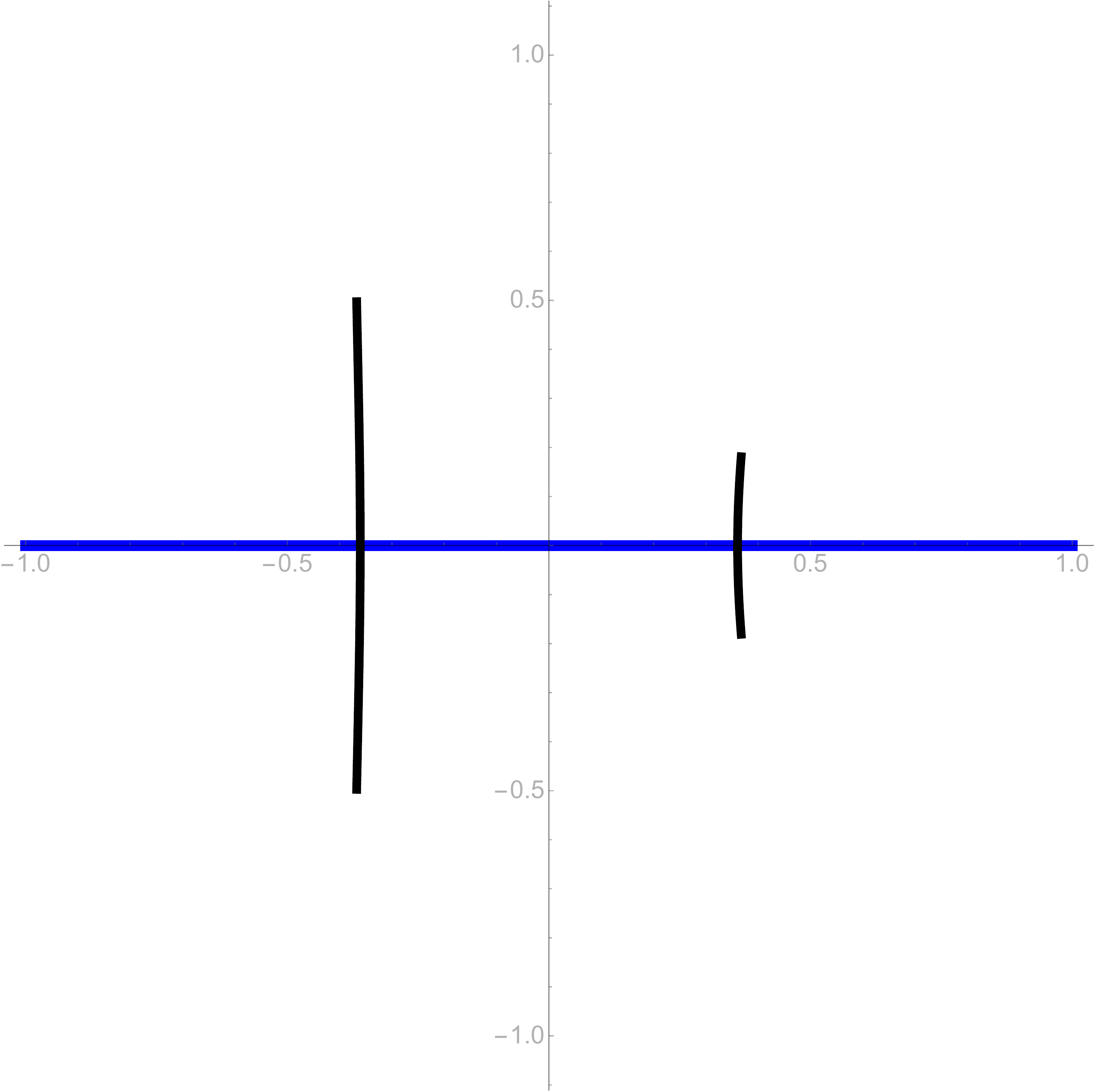}};
       \node[inner sep=0pt] (focStab) at (4,0)
       {\includegraphics[width=0.4\linewidth]{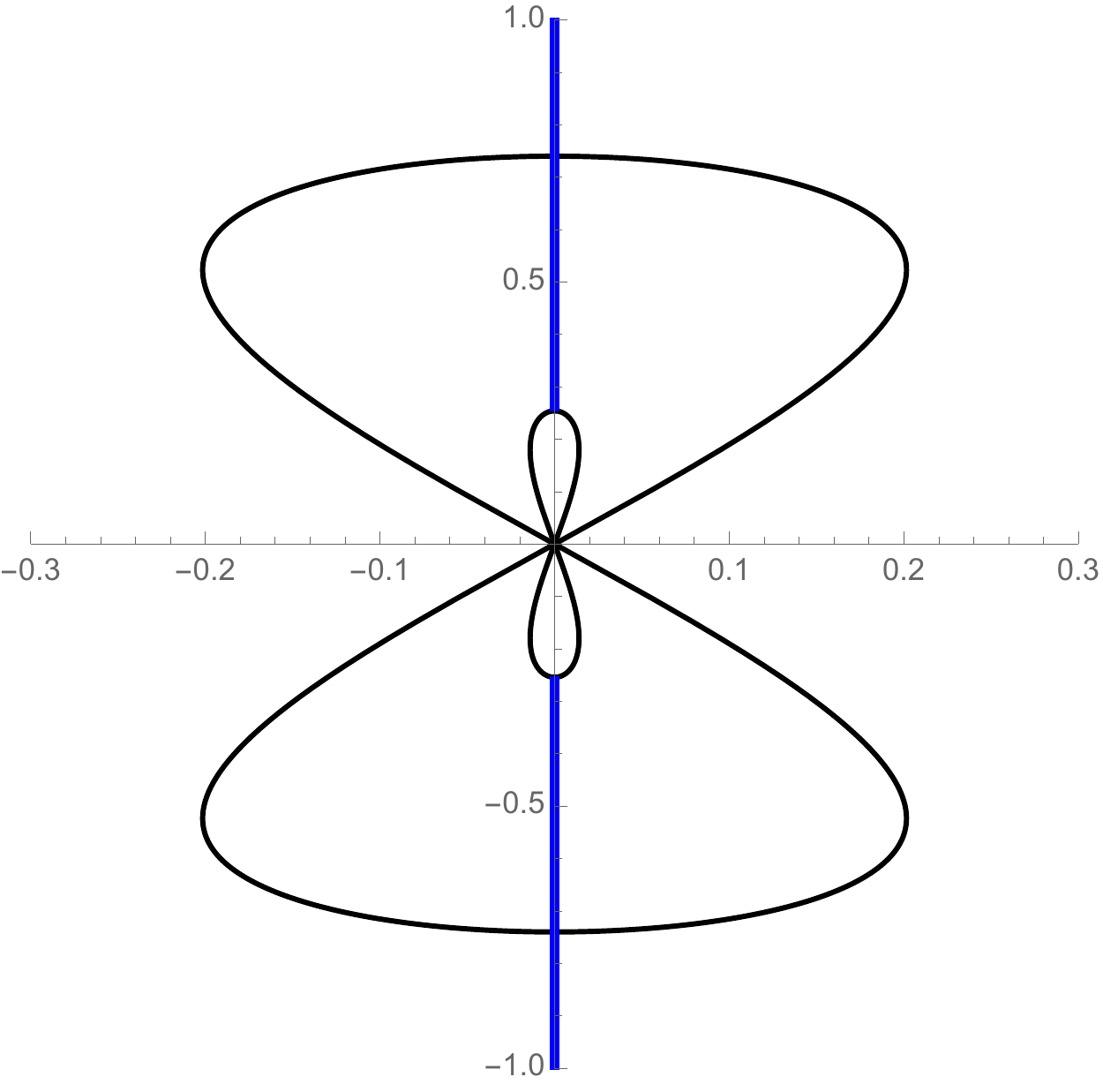}};
       \node (Lax) at (0.9,1.1) {$\sigma_{{\scriptscriptstyle L}}$};
       \node (Stab) at (3.3,1.1) {$\sigma_{\scriptstyle{\cL_{\text{\tiny{NLS}}}}}$};
       \draw [->] (Lax) to [out = 30, in=155] (Stab);
       \node at (2.0, 1.7) {$\Omega$};
    \end{tikzpicture}
      \caption{\label{fig:focNLS}}
    \end{subfigure}
    \caption{The real \vs imaginary part of the Lax and stability spectrum
      (left and right of each panel, respectively) for an elliptic solution of (a) the
      defocusing NLS equation and (b) the focusing NLS equation. The real
      component of $\sigma_{L}$ and its image under $\Omega$ is colored blue.
      The rest of $\sigma_L$ is in black. The Lax spectrum is computed
      analytically using \eqref{eqn:AKNSLaxCondition} \cite{bottman2011elliptic,
      deconinck2017stability} and the stability spectrum is the image under
      $\Omega$.  \label{fig:NLSspectrum}}
  \end{figure}
\end{example*}

\section{AKNS Examples}\label{sec:AKNSExamples}
In this section we apply the results of Section \ref{sec:AKNS} to three
integrable equations with connections to the AKNS hierarchy. The first equation
is in the AKNS hierarchy with the reduction \eqref{eqn:involutions}:
Theorem~\ref{thm:AKNSThm} applies directly.  Some results are new and others
confirm known results.  The second equation is in the AKNS hierarchy but does
not have the reduction \eqref{eqn:involutions}. Theorem~\ref{thm:AKNSThm} does
not apply immediately, but similar conclusions are made. The third equation is
in the AKNS hierarchy and \eqref{eqn:involutions} holds. However, a change of
variables is required to make the problem physical that takes it out of the
AKNS hierarchy.  Still, we find results identical to Theorem~\ref{thm:AKNSThm}.
This leads us to search for a generalization of Theorem~\ref{thm:AKNSThm} in
Section
\ref{sec:generalization}. 

\subsection{The modified Korteweg-de Vries equation}\label{example:mKdV}
The modified Korteweg-de Vries (mKdV) equation is given by
\begin{align}
  u_t - 6 \kappa u^2 u_x + u_{xxx} &= 0,\label{eqn:mKdV}
\end{align}
where $u$ is a real-valued function and $\kappa = -1$ and $\kappa = 1$
correspond to the focusing and defocusing cases respectively. Equation
\eqref{eqn:mKdV} is a member of the AKNS hierarchy (Section \ref{sec:AKNS})
with the Lax pair \cite{AKNS74}
  \begin{align}
   &A = -4i\zeta^3 - 2i\zeta qr, &
   &B = 4\zeta^2 q + 2q^2r + 2i\zeta q_x - q_{xx},&
   &C = 4\zeta^2 r + 2qr^2 - 2i\zeta r_x - r_{xx},&
  \end{align}
and $r = \kappa q = \kappa u$.  Letting $(y,\tau) = (x-ct, t)$, where $c\in\R$
is constant, gives mKdV in the traveling frame,
\begin{align}
  u_\tau - c u_y - 6 \kappa u^2 u_y + u_{yyy} &=0.\label{eqn:mKdVStationary}
\end{align}
The Lax pair for \eqref{eqn:mKdVStationary} changes accordingly:
\begin{align}
  &\phi_y = X\phi, & &\phi_\tau = (T+cX)\phi.&
\end{align}
The elliptic solutions of mKdV are found upon equating $u_\tau = 0$
\cite{mkdvOrbitalStability}. The Lax pair can be found by using the recursion
operator as mentioned in Remark~\ref{remark:recursionOp}, so the remarks there
holds and Assumption~\ref{assumption:stationary} applies.

Theorem~\ref{thm:AKNSThm} applies with $\cQ = \R$ since $A(\R)\subset i\R$.
When $\kappa = -1$, $\R\subset \sigma_L$ and $\Omega(\R)\subset
i\R\cap\sigma_{\cL}$. This result is new: it can be used as a first
step for studying the stability of the elliptic solutions of the focusing mKdV
equation. Solutions are stable with respect to perturbations that excite only
real elements of the Lax spectrum. We do not yet know what class of
perturbations this corresponds to. When $\kappa = 1$, $\cS = \{\zeta \in \R:
\Omega(\zeta)\in i\R\}\subset \sigma_L$, $\Omega(\cS)\subset i\R\cap
\sigma_{\cL}$, and $\Omega(\cS)\subset\sigma_{\cL}$. Since $\R\subset
\sigma_L$, one must establish that $\{\zeta\in\R:\Omega(\zeta)\notin
i\R\setminus\{0\}\}\not\subset \sigma_L$ to establish stability. This is proven in
\cite{mkdvOrbitalStability} to show that the elliptic solutions for the
defocusing mKdV equation are stable.

\subsection{The PT-symmetric reverse space nonlocal NLS
equation}\label{sec:NonlocalAKNS}

The PT-symmetric reverse space nonlocal NLS equation is given by
\cite{ablowitz-nonlocalNLS}
\begin{align}
  i \Psi_t(x,t) + \frac12 \Psi_{xx}(x,t) - \kappa \Psi(x,t)^2
  \Psi\cc(-x,t) &= 0,\label{eqn:NLNLS1}
\end{align}
where $\kappa = \pm 1$. This equation is a member of the AKNS hierarchy
(Section \ref{sec:AKNS}) with the Lax pair \cite{ablowitz-nonlocalNLS}
\begin{align}
  \label{eqn:NLNLSLaxPair}
  &A(x,t) = -i\zeta^2 - iqr/2,& &B = \zeta q + i q_x/2,& &C = \zeta r - i
  r_x/2,&
\end{align}
and $r(x,t) = \kappa q\cc(-x,t) = \kappa \Psi\cc(-x,t)\in \C$. Solutions of
\eqref{eqn:NLS} with
\begin{align}
  \Psi(x,t) = \Psi(-x,t),
\end{align}
are also solutions of \eqref{eqn:NLNLS1}. Thus all even solutions of
\eqref{eqn:NLS} examined in \cite{bottman2011elliptic} for $\kappa = 1$ and in
\cite{upsalNLS} for $\kappa = -1$ are solutions to \eqref{eqn:NLNLS1}.
These solutions, and other periodic and quasi-periodic solutions of
\eqref{eqn:NLNLS1}, were first reported in \cite{periodicSolnsOfNonlocalNLS1}.
Almost every solution found in \cite{periodicSolnsOfNonlocalNLS1} is even in
$x$, except for one that is odd. For the even solutions, the Lax spectrum
remains unchanged since $\Psi(x,t) = \Psi(-x,t)$: hence the stability results in
\cite{bottman2011elliptic, upsalNLS} hold for these solutions. 

The Lax pair is not self adjoint for $\kappa = 1$, so $\sigma_L$ is not
necessarily a subset of $\R$. With the equation written in a uniformly rotating
frame so that all components are time independent,
Assumption~\ref{assumption:stationary} and Remark~\ref{remark:recursionOp}
applies.  Upon assuming that $q$ and $r$ are $P$-periodic or quasiperiodic and
$\zeta\in\R$, we have that $\zeta \in \sigma_L$ if
\begin{align}
  \RE \Langle \frac{qC}{A-\Omega}\Rangle = 0.\label{eqn:nonlocalCondition}
\end{align}
This condition is found just as \eqref{eqn:AKNSLaxCondition} was found and is an
example of several equivalent conditions for the Lax spectrum derived in
Section~\ref{sec:ComputingLaxSpectrumGeneral}. For $\zeta \in \R$, 
\begin{align}
  A\cc(-x) = -A(x), \qquad \text{and}\qquad C\cc(-x) = \kappa
  B(x).\label{eqn:reverseSpaceParitySymmetry}
\end{align}
so that
\begin{align}
\begin{split}
  \Omega^2(\zeta) &= A(x)^2 + B(x)C(x) = A(-x)^2 + B(-x)C(-x)\\
  &= (A^2 + BC)\cc = \lt(\Omega^2(\zeta)\rt)\cc,
\end{split}
\end{align}
and $\Omega(\R) \subset \R \cup i\R$. If $\zeta \in \R$ and $\Omega(\zeta) \in
i\R$, 
\begin{align}
  \begin{split}
  \RE\int_0^P \lt(\frac{q(x)C(x)}{A(x)-\Omega}\rt)~\d x &=
  \frac12 \lt(\int_0^P \frac{q(x)C(x)}{A(x)-\Omega}~\d x +
                  \int_0^P \frac{q\cc(x) C\cc(x)}{A\cc(x)-\Omega\cc}~\d x
                  \rt)\\
  &= \frac12 \lt(\int_0^P \frac{q(x)C(x)}{A(x)-\Omega}~\d x -
                  \int_0^{-P} \frac{q\cc(-x) C\cc(-x)}{A\cc(-x)-\Omega\cc}~\d x
                  \rt)\\
  &= \frac12 \lt(\int_0^P \frac{q(x)C(x)}{A(x)-\Omega}~\d x +
                  \int_{-P}^{0} \frac{q\cc(-x) C\cc(-x)}{-A(x)+\Omega}~\d x
                  \rt)\\
  &= \frac12 \lt(\int_0^P \frac{q(x)C(x)}{A(x)-\Omega}~\d x +
                  \int_{0}^{P} \frac{q\cc(-x) C\cc(-x)}{-A(x)+\Omega}~\d x
                  \rt)\\
  &= \frac12 \lt(\int_0^P \frac{q(x)C(x) -
                    q\cc(-x)C\cc(-x)}{A(x)-\Omega}~\d x\rt)\\
  &= \frac12 \int_0^P \frac{A_x}{A(x)-\Omega}~\d x =0,
  \end{split}
\end{align}
since $A$ is periodic when $q$ and $r$ are. It follows from
\eqref{eqn:nonlocalCondition} that $\{\zeta \in \R : \Omega(\zeta)\in i\R\}
\subset \sigma_L$ for $\kappa = \pm 1$.

\subsection{The sine- and sinh-Gordon equations \label{sec:SG}}
The sine-Gordon (s-G) equation in light-cone coordinates is given by
\begin{align}
  u_{\xi \eta} &= \sin u,\label{eqn:SGLC}
\end{align}
where $u(\xi,\eta)$ is real valued. Equation \eqref{eqn:SGLC} is a member of the
AKNS hierarchy (Section \ref{sec:AKNS}) with Lax pair \cite{AKNS74}
\begin{align}
  &A = \frac{i}{4\zeta}\cos(u),& \qquad &B = \frac{i}{4\zeta}\sin(u),& \qquad &C =
  \frac{i}{4\zeta} \sin(u).&
\end{align}
We write $(\xi, \eta)$ instead of $(x,t)$ to distinguish between the light-cone
coordinates $(\xi,\eta)$ and the space-time coordinates $(x,t)$. Equation
\eqref{eqn:SGLC} is equivalent to the compatibility of mixed derivatives,
$\partial_\eta v_\xi = \partial_\xi v_\eta$, by requiring that $r = -q =
u_\xi/2$. Since $r = -q$, \eqref{eqn:SGLC} is not self adjoint. A self-adjoint
variant of the s-G equation is the sinh-Gordon (sh-G) equation,
\begin{align}
  u_{\xi \eta} &= \sinh u.\label{eqn:sinhGLC}
\end{align}
Equation \eqref{eqn:sinhGLC} is a member of the AKNS hierarchy (Section
\ref{sec:AKNS}) with Lax pair \cite{AKNS74}
\begin{align}
  A &= \frac{i}{4\zeta} \cosh(u), \qquad B = -\frac{i}{4\zeta}\sinh(u), \qquad C
  = \frac{i}{4\zeta}\sinh(u),
\end{align}
and is equivalent to the compatibility of mixed derivatives under the reduction
$r = q = u_\xi/2$.

The Lax par for both the s-G and the sh-G equations can be found by using the
recursion operator as mentioned in Remark~\ref{remark:recursionOp}, so the
remarks there and Assumption~\ref{assumption:stationary} apply.
Theorem~\ref{thm:AKNSThm} applies to the s-G equation when $\kappa = -1$ and to
the sh-G equation when $\kappa = 1$. For $\kappa = -1$, $\R\subset\sigma_L$ and
$\Omega(\R)\in \sigma_{\cL}\cap i\R$. For $\kappa = 1$,
$\R\cap\{\zeta\in\R:\Omega(\zeta)\in i\R\} \subset\sigma_L$. However since
$\eta$ is not a timelike variable, stability results mean little for this
equation in these variables. Instead, we transform from light-cone to
laboratory coordinates.
To transform \eqref{eqn:SGLC} from light-cone to laboratory coordinates, we let
$(x,t) = (\eta + \xi, \eta - \xi)$ to obtain 
\begin{align} 
	u_{tt} - u_{xx} + \sin(u) &= 0.\label{eqn:SG}
\end{align}
The same coordinate transformation on \eqref{eqn:sinhGLC} gives
\begin{align}
  u_{tt} - u_{xx} + \sinh(u) &= 0.\label{eqn:ShG}
\end{align}
The Lax pair for both systems is
\begin{align}
  w_x &= \frac12(T+X) w = \hat Xw, \qquad w_t = \frac12(T-X)w = \hat T w.\label{eqn:SGLaxPair}
\end{align}
Note that \eqref{eqn:SG} and \eqref{eqn:ShG} are not members of the AKNS
hierarchy. Nonetheless we show that statements very similar to those made in
Theorem~\ref{thm:AKNSThm} hold for this equation. This gives us a bridge from
the AKNS framework to generalizations.

We move the s-G equation \eqref{eqn:SG} and the sh-G equation \eqref{eqn:ShG}
to a traveling frame by letting \mbox{$(z,\tau)=(x-Vt,t)$} for constant $V\in
\R$ and find
\begin{align}
  (V^2-1) u_{zz} - 2V u_{z\tau} + u_{\tau\tau}+\sin(u) = 0,
\end{align}
and
\begin{align}
  (V^2-1) u_{zz} - 2V u_{z\tau} + u_{\tau\tau} + \sinh(u) = 0,
\end{align}
respectively. The new Lax pair is given by
\begin{align}
  w_z &=\hat X w, \qquad w_\tau = (\hat T + V \hat X) w = \tilde T
  w.\label{eqn:SGLaxPairTravel}
\end{align}
Periodic stationary solutions are found by letting $u_{\tau} = 0$ and assuming
$u$ is periodic in $x$. Assumption~\ref{assumption:stationary} holds and the Lax
pair is of the form mentioned in Remark~\ref{remark:recursionOp}. Since
the transformation to lab coordinates is isospectral, the spectrum $\sigma_L$
does not change:  $\R\subset \sigma_L$ and $\Omega(\R)\subset i\R\cap
\sigma_{\cL}$ (the squared-eigenfunction connection gives $\lam =
2\Omega(\zeta)$ here as well \cite{SGstability}). Solutions are stable with
respect to perturbations that excite only real elements of the Lax spectrum.
These results are known and have been shown in \cite{SGstability} (whose
results agree with \cite{JonesMarangell-sG} where the Lax spectrum is not used).
For the sh-G equation, $\cS = \{\zeta \in \R:\Omega(\zeta)\in
i\R\}\subset\sigma_L$, $\Omega(\cS)\subset i\R\cap \sigma_{\cL}$, and
$\Omega(\cS)\subset \sigma_{\cL}$.  Since $\R\subset\sigma_L$, one must
establish that $\cU = \{\zeta\in\R:\Omega(\zeta)\in
i\R\setminus\{0\}\}\not\subset\sigma_L$ to establish stability. This result is
new and $\cU\not\subset\sigma_L$ has not yet been shown. 

The above results show that Theorem~\ref{thm:AKNSThm} may be applicable to some
non-AKNS integrable equations. In the next section, we establish a theorem to
this effect.  However, before continuing on we continue to study the spectral
problem for \eqref{eqn:SG} and \eqref{eqn:ShG} for exposition purposes.

The Lax pair \eqref{eqn:SGLaxPair} defines a quadratic eigenvalue problem (QEP),
\begin{align}
  Q(\zeta) w &= (M\zeta^2 + N\zeta + K) w = 0. \label{eqn:QEP}
\end{align}
There are two choices for $M,~N$, and $K$:
\begin{subequations}
\begin{align}
  &M_1 = \begin{pmatrix} 1 & 0 \\ 0 &1 \end{pmatrix},&
  &N_1 = \begin{pmatrix} -2 i\partial_x & iq \\ -ir &
    2i\partial_x\end{pmatrix},&
  &K_1 = \begin{pmatrix} ia & ib \\ ic & -ia \end{pmatrix},&\\
  &M_2 = \begin{pmatrix} 1 & 0 \\ 0 &-1 \end{pmatrix},&
  &N_2 = \begin{pmatrix} -2 i\partial_x & iq \\ ir &
    -2i\partial_x\end{pmatrix},&
  &K_2 = \begin{pmatrix} ia & ib \\ -ic & ia \end{pmatrix},
\end{align}
\end{subequations}
where
\begin{align}
  a = \zeta A , \quad b = \zeta B, \quad c = \zeta C.
\end{align}
Then $\zeta\in \C$ is an eigenvalue of $Q$ if and only if $Q(\zeta)w = 0$ for
all bounded $w$. A QEP is classified as self adjoint if $M,~N,$ and $K$ are
self adjoint \cite{tisseur2001quadratic}. The eigenvalues for self-adjoint QEPs
are either real or come in complex-conjugate pairs. If $M_1,~N_1,$ and $K_1$
are chosen, then $Q(\lambda)$ is self adjoint if $r = q\cc$, $a\cc = -a$, and
$c\cc = b$, which is the case for the sh-G equation. If $M_2,~N_2,$ and $K_2$
are chosen, then $Q(\lambda)$ is self adjoint if $r = -q\cc$, $a\cc = -a$, and
$c\cc = -b$, which is the case for the s-G equation.  It follows that for
either equation, the Lax spectrum consists of real or complex-conjugate
spectral elements. This confirms what we know from the isospectral transform to
light-cone coordinates.  The whole real line is part of the Lax spectrum for
the s-G equation and the Lax spectrum for the sh-G equation is a subset of the
real line. To determine the subset of $\sigma_L$ off the real line, one may use
the integral condition \eqref{eqn:AKNSIntegralConditionPeriodic} (used in
\cite{SGstability} to find $\sigma_L$) or the Floquet discriminant (Section
\ref{sec:FloquetDiscriminant}).

\section{Generalization of the AKNS results}\label{sec:generalization}
As seen from the AKNS examples, the real line of the Lax spectrum plays an
important role in stability. For the non self-adjoint members of
the AKNS hierarchy (those with $\kappa = -1$), the real line is part of the Lax
spectrum and maps to stable elements of the stability spectrum. For the
self-adjoint members of the hierarchy (those with $\kappa = 1$), the real line
is a subset of the Lax spectrum. If one can establish that
$\{\zeta\in\R:\Omega(\zeta)\in \R\setminus\{0\}\}\not\subset \sigma_L\}$, then
the solution of interest is stable. The sine-Gordon and sinh-Gordon examples
indicate that this trend holds even for integrable systems not in the AKNS
hierarchy. In what follows, we extend the AKNS results to integrable systems
that are not in the AKNS hierarchy.

\subsection{Setup}

In this section we consider integrable equations \eqref{eqn:generalNLEE}
possessing a $2\times 2$ Lax pair of the form,
\begin{subequations}
\label{eqn:generalLaxPair}
\begin{align}
\label{eqn:generalLaxPairx}
\phi_x(x,t;\zeta) &= \begin{pmatrix*}[r]
    \alpha(x,t;\zeta) & \beta(x,t;\zeta) \\
    \gamma(x,t;\zeta) & -\alpha(x,t;\zeta)
  \end{pmatrix*} \phi(x,t;\zeta) = X\phi,\\
\label{eqn:generalLaxPairt}
\phi_t(x,t;\zeta) &= \begin{pmatrix*}[r]
    A(x,t;\zeta) & B(x,t;\zeta) \\
    C(x,t;\zeta) & -A(x,t;\zeta)
  \end{pmatrix*}\phi(x,t;\zeta) = T\phi,
\end{align}
\end{subequations}
where $\alpha,~\beta,~\gamma,~A,~B,$ and $C$ are complex-valued functions.  
As in our analysis of the AKNS hierarchy, we restrict our analysis to Lax pairs
where the elements of $\cP = \lt\{\alpha,~\beta,~\gamma,~A,~B,~C\rt\}$ are
bounded for all $x\in \overline{\R}$ and are autonomous in $t$. Since we are interested in
studying stationary solutions, we assume again that $\alpha_t = \beta_t =
\gamma_t = 0$.  In the stationary frame, the compatibility of
\eqref{eqn:generalLaxPair} defines the conditions
\begin{subequations}
  \label{eqn:compatibilityConditionsStationary}
\begin{align}
  A_x &= \beta C - \gamma B ,\label{eqn:Aeqn}\\
  B_x &= 2\lt(\alpha B - \beta A \rt),\label{eqn:Beqn}\\
  C_x &= -2\lt(\alpha C - \gamma A \rt)\label{eqn:Ceqn}.
\end{align}
\end{subequations}
The definition for the Lax spectrum \eqref{eqn:LaxSpectrumDef} is unchanged.

\subsection{Computing the Lax spectrum}\label{sec:ComputingLaxSpectrumGeneral}
With $A$, $B$,  and $C$ $t$-independent, \eqref{eqn:generalLaxPairt} may be
solved by separation of variables resulting in once again \eqref{eqn:Omega}.
Here $\Omega$ has the same properties as for the AKNS hierarchy and
Lemma~\ref{lem:OmegaIndep} holds with a nearly identical proof. We again
consider the special case of the reduction
\begin{align}
  \gamma = \kappa \beta \cc,\qquad \kappa = \pm 1.\label{eqn:generalReduction}
\end{align}
With such a reduction, we let
\begin{align}
  \beta(x;\zeta) = \eta(x;\zeta) e^{i\theta(x;\zeta)}, \qquad
  \gamma(x;\zeta) = \kappa\eta(x;\zeta)  e^{-i\theta(x;\zeta)},
\end{align}
where $\eta$ and $\theta$ are real-valued functions with $\eta(x;\zeta)\geq 0$.
We also assume that $\eta$ is a $P$-periodic function.  Using the isospectral
transformation \eqref{eqn:Isospectral}, the Lax pair \eqref{eqn:generalLaxPair}
becomes
\begin{align}
  \Phi_x = \begin{pmatrix} \hat \alpha & \hat \beta \\ \hat \gamma & -\hat
    \alpha\end{pmatrix}\Phi, \qquad \Phi_t = \begin{pmatrix} \hat A & \hat B \\ \hat C
    & -\hat A\end{pmatrix}\Phi,\label{eqn:isospectralLaxPair}
\end{align}
where
  
\begin{equation}
\begin{aligned}
  &\hat \alpha = \alpha - i\theta_x/2, & &\hat \beta = \beta
  e^{-i\theta} = \eta,& &\hat \gamma = \gamma e^{i\theta} = \kappa\eta,&\\
  &\hat A = A,& &\hat B = e^{-i\theta} B,& &\hat C = e^{i\theta} C.&
\end{aligned}
\label{eqn:periodicLaxPair}
\end{equation}
The eigenfunctions here are identical to \eqref{eqn:theAKNSEigenfunctions}.
Similar to what we did there, we get two ODEs for $y_1$, but also two for $y_2$.
This gives two expressions for $y_1$ and two for $y_2$. In each case the
exponential term needs to be bounded for $x\in \overline{\R}$. Inspired by the
results for the AKNS hierarchy, we assume that $\hat A,~\hat B,$ and $\hat C$
are $P$-periodic. With these assumptions, we obtain eight boundedness
conditions that define the Lax spectrum (four from $y_1$ and $y_2$, and four
from rewriting those four using
\eqref{eqn:compatibilityConditionsStationary}):
\begin{equation}
\begin{aligned}
&\RE  \Langle\hat \alpha - \frac{\hat \beta(\hat A-\Omega)}{\hat B}\Rangle = 0,&
&\RE  \Langle\hat \alpha + \frac{\hat \gamma \hat B}{\hat A-\Omega}\Rangle = 0,&
&\RE  \Langle\hat \alpha + \frac{\hat \beta \hat C}{\hat A+\Omega}\Rangle = 0,&
&\RE  \Langle\frac{\hat \beta \Omega}{\hat B}\Rangle = 0,&\\
&\RE  \Langle\hat \alpha - \frac{\hat \gamma(\hat A+\Omega)}{\hat C}\Rangle = 0,&
&\RE  \Langle\hat \alpha + \frac{\hat \beta \hat C}{\hat A-\Omega}\Rangle = 0,&
&\RE  \Langle\hat \alpha + \frac{\hat \gamma \hat B}{\hat A+\Omega}\Rangle = 0,&
&\RE  \Langle\frac{\hat \gamma \Omega}{\hat C}\Rangle = 0,&
\end{aligned}
  \label{eqn:newLaxSpectrumConditions}
\end{equation}
If any of these conditions are satisfied for a particular $\hat \zeta \in \C$,
then $\hat\zeta\in\sigma_L$. Some of these conditions are new and some have
been used in \cite{bottman2009kdv, bottman2011elliptic, deconinck2010orbital,
SGstability, mkdvOrbitalStability, deconinck2017stability, upsalNLS}. This is
the first time all are written down in full generality. 

\te{
\begin{remark}\label{remark:asymptotics2}
The conditions \eqref{eqn:newLaxSpectrumConditions} are generalizations of the
Lax condition \eqref{eqn:AKNSLaxConditionNonreal} for members of the AKNS
hierarchy. Each condition is necessary and sufficient for $\zeta \in \sigma_L$.
This condition is typically nontrivial to work with directly, but it does allow
one to find large subsets of the Lax spectrum by considering large $\zeta$
asymptotics. We note that the last condition in
\eqref{eqn:newLaxSpectrumConditions} implies that $\{\zeta\in \C: \Omega(\zeta)
= 0\}\subset \sigma_L$.
\end{remark}
}

Lemma~\ref{lem:AKNSCB} and Theorem~\ref{thm:AKNSThm} have immediate analogues
here, whose proof is nearly identical.

\begin{theorem}\label{thm:nonAKNSThm}
  Consider an integrable equation \eqref{eqn:generalNLEE} possessing the Lax
  pair \eqref{eqn:generalLaxPair} with the reduction
  \eqref{eqn:generalReduction}. Assume that $\hat A,~\hat B,$ and $\hat C$ from
  \eqref{eqn:periodicLaxPair} are $P$-periodic. Let 
  \begin{equation}
  \begin{aligned}
    &\cQ_{-} = \{\zeta\in\C: \alpha(\zeta),A(\zeta)\in i\R \text{ and } \beta =
  -\gamma\cc\},&
  &\cQ_{+} = \{\zeta\in\C: \alpha(\zeta),A(\zeta)\in i\R \text{ and } \beta =
  \gamma\cc\},&\\
  &\hspace{2cm}\cS = \{\zeta \in \C: \Omega(\zeta)\in i\R\},\hspace{1cm} \text{and}&
  &\hspace{1cm}\cU = \{\zeta\in\C: \Omega(\zeta)\in \R\}.&
  \end{aligned}
  \end{equation}
  Then $\cQ_-\subset \cS,~\cQ_+\subset\cS\cup\cU,~\cQ_{\pm}\cap
  \cS\subset\sigma_L$, and $\Omega(\cQ_{\pm}\cap \cS)\subset\sigma_{\cL}\cap
  i\R$ if $\Omega$ is shown to be surjective.

  In other words, assume that $\alpha(\zeta),A(\zeta)\in i\R$. Then
  $\Omega(\zeta)\in i\R$ when $\beta(\zeta) = -\gamma(\zeta)\cc,$ (akin to the
  $\kappa = -1$ case in Theorem~\ref{thm:AKNSThm}), and those $\zeta$ are in the
  Lax spectrum, $\sigma_L$. Further, they map to stable elements of the
  stability spectrum. When $\beta(\zeta) = \gamma(\zeta)\cc,$ (akin to the
  $\kappa = 1$ case in Theorem~\ref{thm:AKNSThm}), $\Omega(\zeta)\in i\R\cup\R$
  and the $\zeta$ such that $\Omega(\zeta)\in i\R$ are in the Lax spectrum and
  map to stable elements of the stability spectrum.
\end{theorem}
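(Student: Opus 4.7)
The plan is to follow the AKNS argument of Lemma~\ref{lem:AKNSCB} and Theorem~\ref{thm:AKNSThm} essentially verbatim, using the general compatibility conditions \eqref{eqn:Aeqn}--\eqref{eqn:Ceqn} in place of \eqref{eqn:AKNScompatibilityHat}. Passing to the isospectral frame \eqref{eqn:periodicLaxPair} the compatibility relations become
\begin{equation*}
\hat A_x = \eta(\hat C - \kappa \hat B), \qquad \hat B_x = 2(\hat\alpha \hat B - \eta \hat A), \qquad \hat C_x = -2(\hat\alpha \hat C - \kappa \eta \hat A),
\end{equation*}
and the hypothesis $\alpha, A \in i\R$ together with $\theta$ real gives $\hat\alpha, \hat A \in i\R$.

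The first substantive step is to prove the analogue of Lemma~\ref{lem:AKNSCB}: on $\cQ_{\pm}$ one has $\hat C = \kappa \hat B\cc$. Taking the real part of $\hat A_x$ and using $\hat A \in i\R$ yields $\eta\,\RE(\hat C - \kappa \hat B) = 0$, so that $\RE(\hat C) = \kappa \RE(\hat B)$ where $\eta \neq 0$, and by continuity everywhere. Taking the real part of $\hat C_x - \kappa \hat B_x$ and using $\hat A \in i\R$ to kill the $\hat A$ contribution leaves $\RE[-2\hat\alpha(\hat C + \kappa \hat B)] = 0$, which at points where $\hat\alpha \neq 0$ forces $\IM(\hat C + \kappa \hat B) = 0$; the points where $\hat\alpha$ vanishes are absorbed by the same continuity/ODE argument used in Lemma~\ref{lem:AKNSCB}. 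With $\hat C = \kappa \hat B\cc$ in hand (equivalently $C = \kappa B\cc$), \eqref{eqn:Omega} gives $\Omega^2 = A^2 + \kappa |B|^2 \in \R$. When $\kappa = -1$ this quantity is nonpositive and hence $\Omega \in i\R$, proving $\cQ_- \subset \cS$; when $\kappa = +1$ the sign is undetermined and we only get $\cQ_+ \subset \cS \cup \cU$.

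To conclude $\cQ_{\pm} \cap \cS \subset \sigma_L$, I would verify the second condition of \eqref{eqn:newLaxSpectrumConditions}, namely $\RE\Langle \hat\alpha + \hat\gamma \hat B/(\hat A - \Omega)\Rangle = 0$. The $\hat\alpha$ term contributes nothing because $\hat\alpha \in i\R$. For the remaining term, $\Omega \in i\R$ and $\hat A \in i\R$ give $(\hat A - \Omega)\cc = -(\hat A - \Omega)$, so combining $\hat\gamma = \kappa\eta$ with $\hat B\cc = \kappa \hat C$ yields
\begin{equation*}
\RE \frac{\hat\gamma \hat B}{\hat A - \Omega} = \frac{\eta(\kappa \hat B - \hat C)}{2(\hat A - \Omega)} = -\frac{\hat A_x}{2(\hat A - \Omega)} = -\tfrac{1}{2}\D{x}\log(\hat A - \Omega),
\end{equation*}
where the middle equality uses $\hat A_x = \eta(\hat C - \kappa \hat B)$. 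Since $\hat A$ is $P$-periodic, the average of a logarithmic derivative of a periodic function vanishes, confirming $\zeta \in \sigma_L$. The inclusion $\Omega(\cQ_{\pm} \cap \cS) \subset i\R \cap \sigma_\cL$ then follows from the assumed surjectivity of $\Omega$.

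The step I expect to be most delicate is the analogue of Lemma~\ref{lem:AKNSCB}: in the AKNS setting $\hat\alpha = -i\zeta - i\theta_x/2$ has a very explicit structure whose zero set is transparent, but here $\hat\alpha(x;\zeta)$ is a generic function of $x$ whose vanishing locus may be more complicated. The continuity/constancy argument still applies, but a small amount of case analysis may be needed to promote $\IM(\hat C + \kappa \hat B) = 0$ from the open set $\{\hat\alpha \neq 0\}$ to all of $\R$. Everything after that is a direct translation of the AKNS computation via the logarithmic-derivative identity above.
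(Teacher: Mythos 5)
Your proposal is correct and is exactly the argument the paper intends: the paper omits the proof of Theorem~\ref{thm:nonAKNSThm}, stating it is ``nearly identical to the proof of Theorem~\ref{thm:AKNSThm},'' and your reconstruction (the hatted compatibility relations, the analogue of Lemma~\ref{lem:AKNSCB} giving $\hat C = \kappa\hat B\cc$, the sign analysis of $\Omega^2 = A^2 + \kappa|B|^2$, and the logarithmic-derivative identity that makes the periodic average vanish) is precisely that translation. Your flagged concern about the vanishing locus of $\hat\alpha$ is present to the same degree in the paper's own Lemma~\ref{lem:AKNSCB}, which resolves it with the identical continuity/ODE argument you describe.
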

The proof of this theorem is nearly identical to the proof of
Theorem~\ref{thm:AKNSThm}, so we omit it here. Before moving on to examples
which demonstrate the applicability of Theorem~\ref{thm:nonAKNSThm}, we remark
that our results do address the intersection of $\sigma_L$ with $\cQ_\pm\cap
\cU$. Indeed, establishing whether or not this is true is important for
understanding the stability of the solution in question.  We do not address
this difficulty here. 

\section{Examples}\label{sec:examples}
In this section provide examples for which Theorem~\ref{thm:nonAKNSThm}
applies. Next, we provide examples for which the theorem does not apply
directly, but for which similar conclusions can be drawn.  We do this to
demonstrate that these results can be generalized and are not necessarily
limited to the special cases considered here.

\subsection{Derivative nonlinear Schr\"{o}dinger equation}
The derivative NLS (dNLS) equation,
\begin{align}
  i q_t &= -q_{xx} +  i\kappa (\abs{q}^2 q)_x,\label{eqn:DNLS} \qquad \kappa =
  \pm 1,
\end{align}
was first solved on the whole line using the Inverse Scattering Transform in
\cite{kaupDNLS}.  The Lax pair for \eqref{eqn:DNLS} is
given by \eqref{eqn:generalLaxPair} with \cite{kaupDNLS}
\begin{equation}
  \label{eqn:DNLSLaxPair}
\begin{aligned}
  & \alpha = -i\zeta^2, & & \beta = q\zeta, & &\gamma = r\zeta,&\\
  &A = -2i\zeta^4 -i\zeta^2 r q, & &B = 2\zeta^3 q + i\zeta q_x + \zeta r q^2, &
  &C = 2\zeta^3 r - i\zeta r_x + \zeta r^2 q.&
\end{aligned}
\end{equation}
where $r = \kappa q\cc \in \C$.  Using $q(x,t) \mapsto e^{-i\omega t}q(x,t)$
where $\omega$ is a real constant, \eqref{eqn:DNLS} becomes
\begin{align}
  i q_t &= -q_{xx} + i\kappa(\abs{q}^2 q)_x - \omega q,
\end{align}
and $A \mapsto A + i\omega /2$; otherwise \eqref{eqn:DNLSLaxPair} remains the
same. Stationary solutions satisfy
\begin{align}
  -q_{xx} + i\kappa(\abs{q}^2 q)_x - \omega q = 0.
\end{align}
Quasi-periodic elliptic solutions to the stationary problem were found in
\cite{DNLSWaves}.

The Lax pair defines a QEP \eqref{eqn:QEP}. There are two choices for $M,~N,$
and $K$,
\begin{subequations}
\begin{align}
  &M_1 = \begin{pmatrix} 1 & 0 \\ 0 & 1 \end{pmatrix},&
  &N_1 = \begin{pmatrix} 0 & i q \\ -i r & 0 \end{pmatrix},&
  &K_1 = \begin{pmatrix} -i\partial_x & 0 \\ 0 & i\partial_x\end{pmatrix},&\\
  &M_2 = \begin{pmatrix} 1 & 0 \\ 0 & -1\end{pmatrix},&
  &N_2 = \begin{pmatrix} 0 & i q \\ ir & 0 \end{pmatrix},&
  &K_2 = \begin{pmatrix} -i\partial_x & 0 \\ 0 & -i\partial_x\end{pmatrix}.&
\end{align}
\end{subequations}
If $M_1,~ N_1,$ and $K_1$ are chosen, then $Q(\lambda)$ is self adjoint if $r =
q\cc$. If $M_2,~N_2,$ and $K_2$ are chosen, then $Q(\lambda)$ is self adjoint if
$r = -q\cc$. It follows that eigenvalues are real or come in complex-conjugate
pairs for either choice of $\kappa$.

Since Assumption~\ref{assumption:stationary} holds and
Remark~\ref{remark:recursionOp} applies here, Theorem~\ref{thm:nonAKNSThm}
applies with different results depending on $\kappa$ and $\zeta$. We have
$\zeta\in\cQ_-$ when $\zeta \in \R$ and $\kappa = -1$ or when $\zeta \in i\R$
and $\kappa = 1$.  Alternatively, $\zeta \in \cQ_+$ when $\zeta \in \R$ and
$\kappa = 1$ or when $\zeta \in i\R$ and $\kappa = -1$. Defining $\Omega_i =
\lt\{\zeta \in \C : \Omega(\zeta)\in i\R\rt\}$, $\R \cup (\Omega_i\cap i\R)
\subset \sigma_L$ and $\Omega(\R \cup (\Omega_i\cap i\R))\subset i\R$ for
$\kappa = -1$ (see Figure \ref{fig:focdNLS}). If $\kappa = 1$, $i\R \cup (\R
\cap \Omega_i)\subset \sigma_L$ and $\Omega(i\R \cup(\R \cap\Omega_i))\subset
i\R$ (see Figure \ref{fig:defocdNLS}).  To compute the spectrum off of the real
or imaginary axes, one must examine the integral conditions
\eqref{eqn:newLaxSpectrumConditions} or construct the Floquet discriminant
(Appendix \ref{sec:FloquetDiscriminant}).

  \begin{figure}
    \begin{subfigure}[b]{0.48\textwidth}
    \begin{tikzpicture}
       \node[inner sep=0pt, align=left] (defocLax) at (0,0)
       {\includegraphics[width=0.55\linewidth]{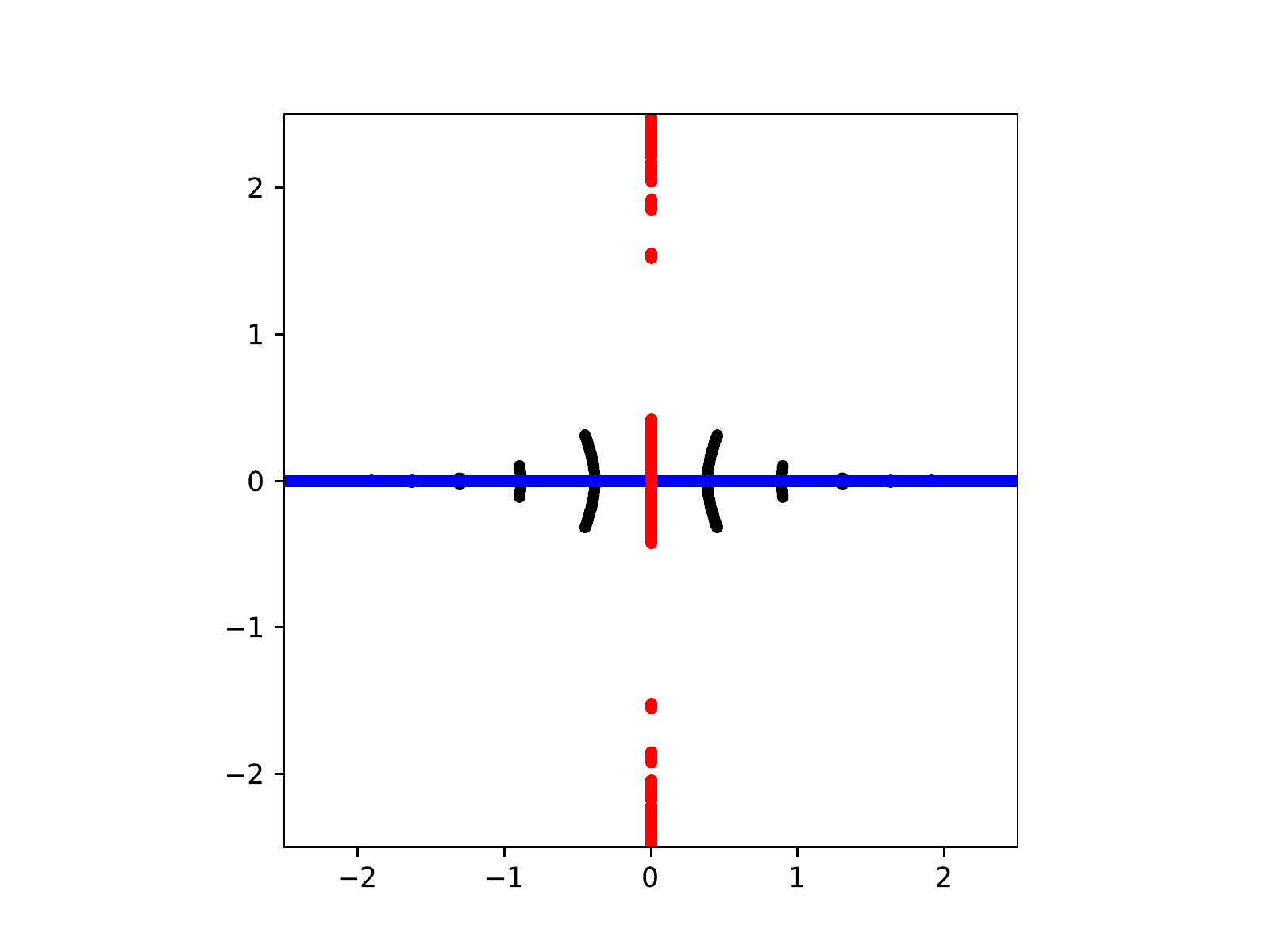}};
       \node[inner sep=0pt, align=left] (defocStab) at (3.5,0)
       {\includegraphics[width=0.55\linewidth]{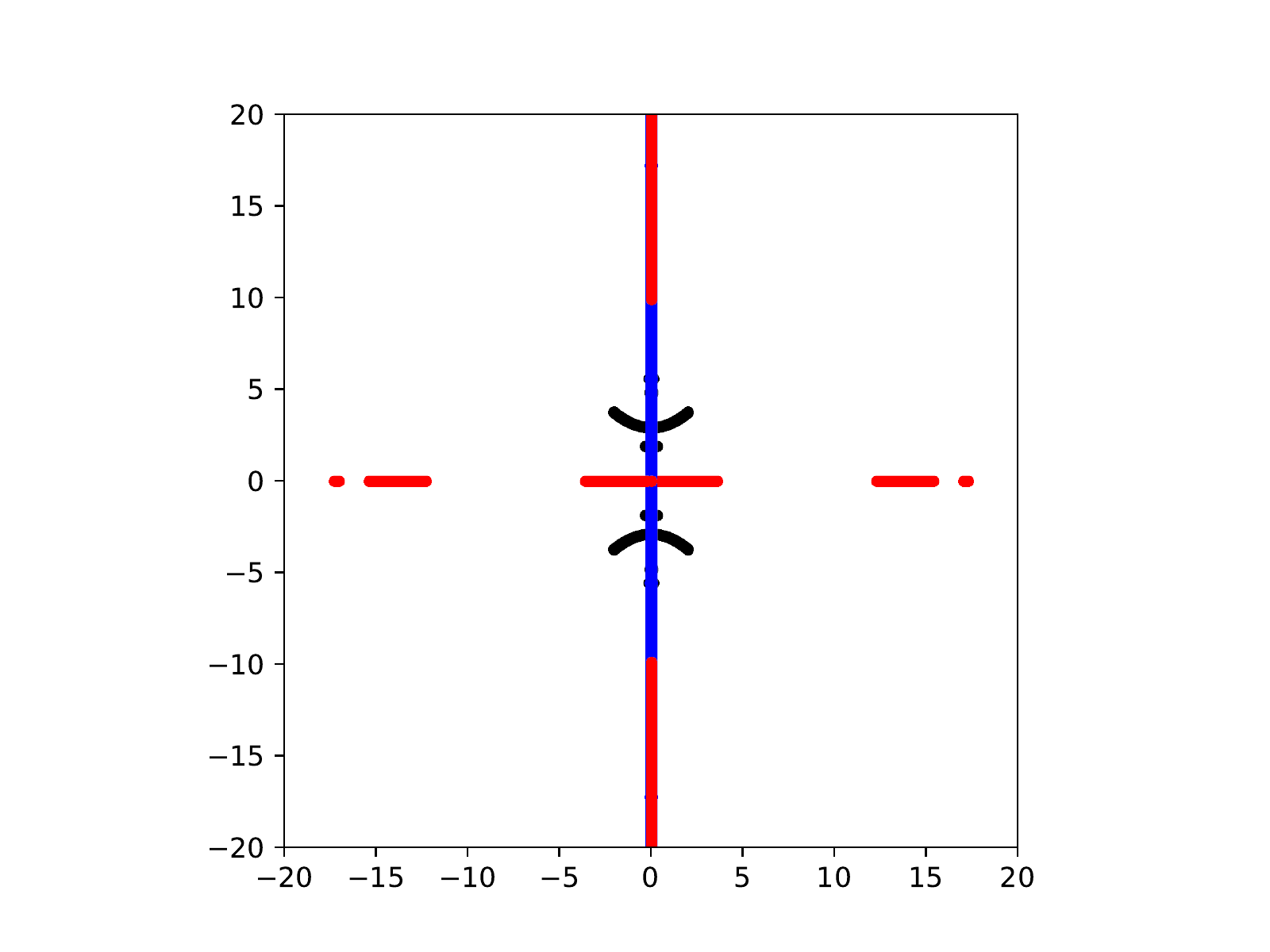}};
       \node (Lax) at (0.9,1.7) {$\sigma_{{\scriptstyle L}}$};
       \node (Stab) at (3.3,1.7)
       {$\sigma_{\scriptstyle{\cL_{\scriptscriptstyle{\text{dNLS}}}}}$};
       \draw [->] (Lax) to [out = 30, in=155] (Stab);
       \node at (2.0, 2.5) {$\Omega$};
    \end{tikzpicture}
      \caption{\label{fig:focdNLS}}
    \end{subfigure}
\hspace{0.02\textwidth}
    \begin{subfigure}[b]{0.48\textwidth}
    \begin{tikzpicture}
       \node[inner sep=0pt] (focLax) at (0,0)
       {\includegraphics[width=0.55\linewidth]{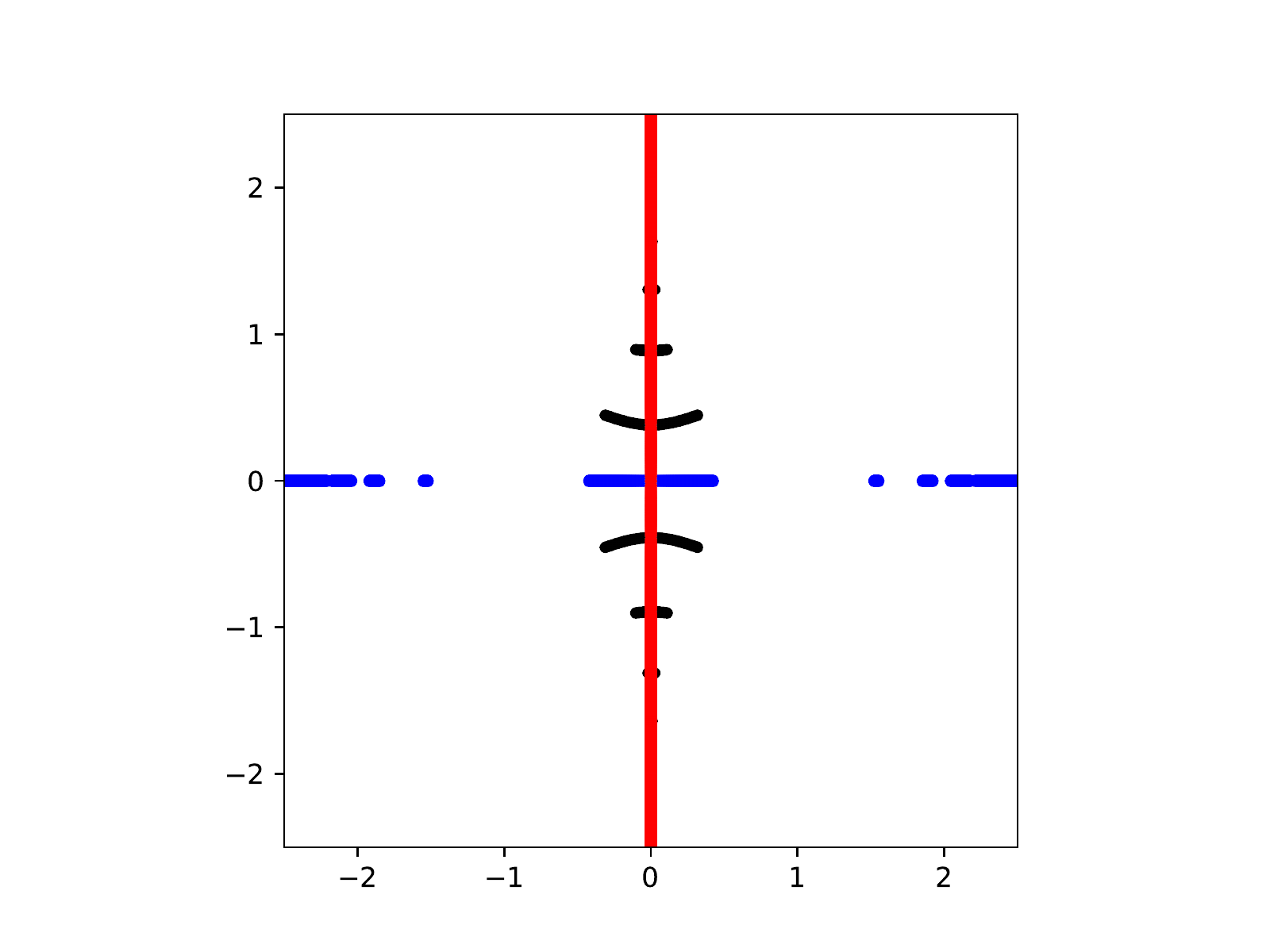}};
       \node[inner sep=0pt] (focStab) at (3.5,0)
       {\includegraphics[width=0.55\linewidth]{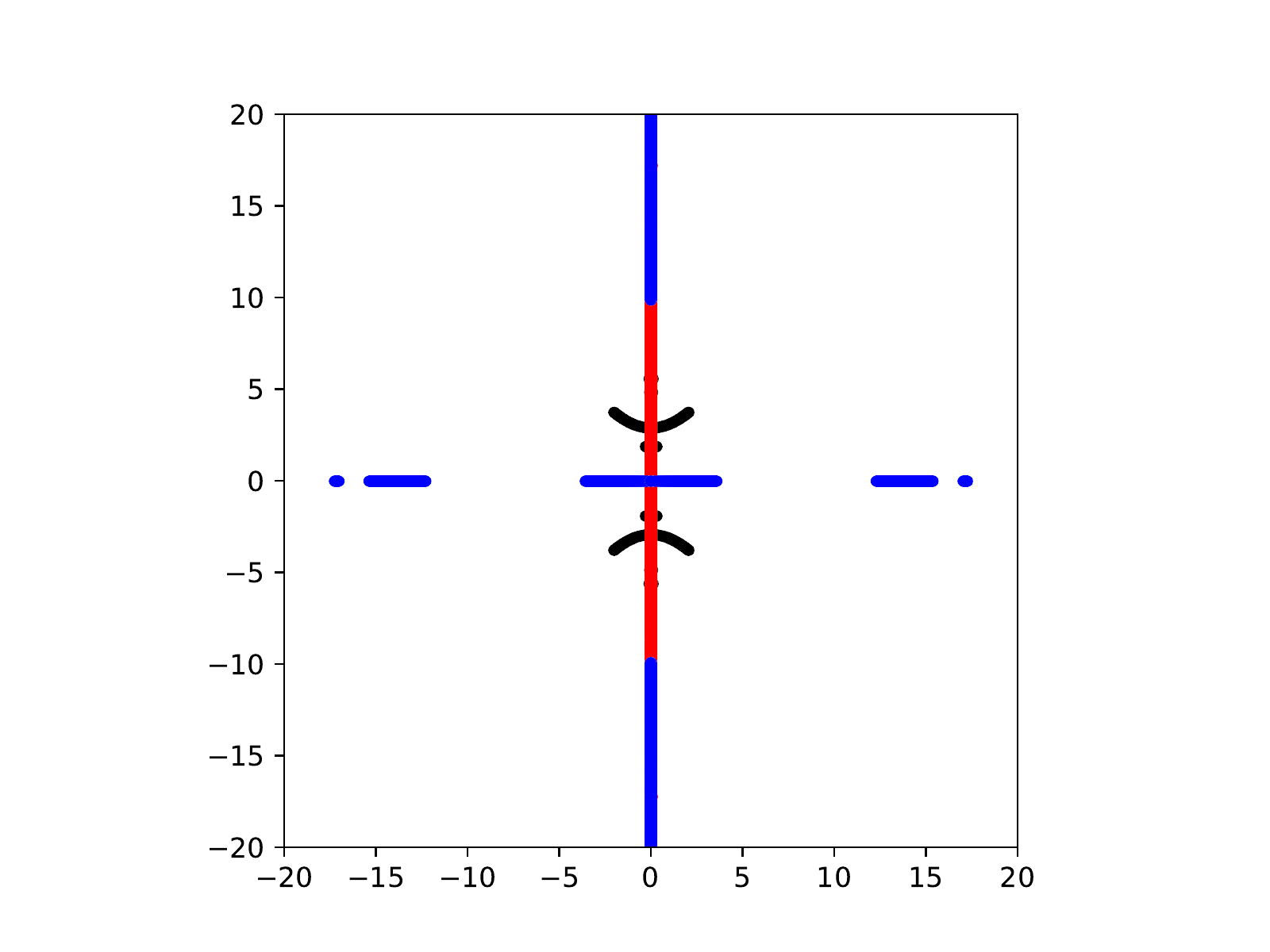}};
       \node (Lax) at (0.9,1.7) {$\sigma_{{\scriptstyle L}}$};
       \node (Stab) at (3.3,1.7)
       {$\sigma_{\scriptstyle{\cL}_{\scriptscriptstyle{\text{dNLS}}}}$};
       \draw [->] (Lax) to [out = 30, in=155] (Stab);
       \node at (2.0, 2.5) {$\Omega$};
    \end{tikzpicture}
      \caption{\label{fig:defocdNLS}}
    \end{subfigure}
    \caption{The real \vs imaginary part of the Lax and stability spectrum
(left and right respectively) for a periodic solution of (a) the focusing dNLS
equation and (b) the defocusing dNLS equation.  \label{fig:NLSspectrum} The Lax
spectrum is computed numerically using \cite{FFHM} and the stability spectrum
is the image under $\Omega$. The Lax spectrum on the real and imaginary axes
are colored blue and red respectively. Their image under the map $\Omega$ is
colored in the stability spectrum accordingly. The Lax spectrum off the real
and imaginary axes and its image under $\Omega$ is black.
    }
  \end{figure}

\subsection{Vector and matrix nonlinear Schr\"{o}dinger equations}
The {\em Manakov system} or two-component Vector NLS (VNLS) equation is given by
\begin{align}
  \label{eqn:VNLS}
\begin{split}
  i \pd{q_1}{t} + \pdd[2]{q_1}{x} + 2(\abs{q_1}^2 + \abs{q_2}^2)q_1 &=0,\\
  i \pd{q_2}{t} + \pdd[2]{q_2}{x} + 2(\abs{q_1}^2 + \abs{q_2}^2)q_2 &= 0,
\end{split}
\end{align}
where $q_1$ and $q_2$ are complex-valued functions.  The system
\eqref{eqn:VNLS} was shown to be integrable in \cite{manakovVNLS}. Its
finite-genus solutions (including its elliptic solutions) were explicitly
constructed in \cite{ellipticVNLS}.  Its Lax pair is
\begin{align}\label{eqn:VNLSLaxPair}
  v_x = \begin{pmatrix}
    \alpha & \beta^\intercal \\
    \gamma & \rho
  \end{pmatrix}v = Xv, \qquad v_t = \begin{pmatrix}
    A & B^\intercal \\ C & D
  \end{pmatrix}v = Tv,
\end{align}
with
\begin{equation}
\begin{aligned}
  &\alpha = -i\zeta,& &\beta = q, & &\gamma = -q\cc,& & \rho =
  i\zeta I_2,\\
  &A = -2i \zeta^2 + i q^\intercal q\cc, & &B = 2\zeta q + i q_x, & &C =
  -2\zeta q\cc + i q\cc_x,& & D = 2i\zeta^2I_2 - iq\cc q^\intercal,&
\end{aligned}
\end{equation}
where $q = (q_1,q_2)^\intercal$ and where $I_n$ is the $n\times n$ identity
matrix.  This example does not fit the results found in Section \ref{sec:AKNS}
or Section \ref{sec:generalization}. However, we show that similar results are
found for this system: $\cQ = \{\zeta \in \R: \Omega(\zeta) \in i\R\} \subset
\sigma_L$ and $\Omega(\cQ) \subset i\R\cap \sigma_{\cL}$, as has been
established for other examples.

The compatibility conditions are
\begin{subequations}
\begin{align}
  A_x &= \beta^\intercal C - B^\intercal\gamma,\\
  B_x &= 2\alpha B^\intercal - 2A\beta^\intercal, \\
  C_x &= 2\rho C + 2\gamma A,\\
  D_x &= \gamma B^\intercal - C \beta^\intercal.\label{eqn:compatibilityVNLS}
\end{align}
\end{subequations}
As before, $\Omega$ is found by separation of variables and satisfies
\begin{align}
  \label{eqn:VNLSEigs}
  \begin{pmatrix}
    A-\Omega & B^\intercal \\
    C & D-\Omega I_2
  \end{pmatrix}
  \begin{pmatrix}
    \phi_1 \\ \phi_2
  \end{pmatrix} &= 0,
\end{align}
for nontrivial eigenvectors $\phi = (\phi_1,\phi_2)^\intercal$. Note that $\Omega$ does
not have the form \eqref{eqn:Omega}. Instead, $\Omega$ satisfies
\begin{align}
  0 &= \det\begin{pmatrix}
    A-\Omega & B^\intercal \\ C & D-\Omega I_2
  \end{pmatrix} = \begin{cases}
    (A-\Omega)\det\lt( (D-\Omega I_2) - CB^\intercal/(A-\Omega)\rt), & \Omega \notin
    \sigma(A),\\
    \det(D-\Omega I_2)\lt( (A-\Omega) - B^\intercal(D-\Omega I_2)^{-1}C\rt), & \Omega
    \notin\sigma(D),
  \end{cases}\label{eqn:VNLSOmega}
\end{align}
where $\sigma(L)$ represents the spectrum of $L$. As usual, $\Omega(\zeta)$
defines a Riemann surface.  In the genus-one case~\cite{ellipticVNLS}, it is
represented by
\begin{subequations}
\label{eqn:VNLSRS}
\begin{align}
  f(\zeta,\Omega) &= (\Omega + 2i\zeta^2)(\Omega - 2i\zeta^2)^2 + (2 \lam_2 \zeta + \lam_3)(\Omega
  - 2i\zeta^2) + \mu_0 = 0 ,\label{eqn:VNLSRS}\\
  \lam_2&= -i(q^\intercal \bar{q}_x - q_x^\intercal\bar q),\\
  \lam_3 &= q_x^\intercal\bar q_x + (q^\intercal\bar q)^2,\\
  \mu_0 &= i\abs{q_{1,x} q_2 - q_{2,x}q_1}^2.
\end{align}
\end{subequations}
Since
\begin{align}
  \det(D-\Omega I_2) &= (\Omega - 2i\zeta^2)(A + \Omega),
\end{align}
we use the second expression in \eqref{eqn:VNLSOmega} only when $\Omega =
2i\zeta^2$ or $A+\Omega = 0$. But $\Omega = 2i\zeta^2$ satisfies
\eqref{eqn:VNLSRS} only if $q_2$ and $q_1$ are proportional, and $\Omega = A$
satisfies \eqref{eqn:VNLSRS} only if $\abs{q_1}^2 + \abs{q_2}^2$ is constant.
In the first case, \eqref{eqn:VNLS} reduces to two uncoupled NLS equations, for
which the spectrum is known. In the second case \eqref{eqn:VNLS} reduces to two
uncoupled linear Schr\"{o}dinger equations, for which the spectrum is known.
Therefore, we assume that $D-\Omega I_2$ is invertible.

The eigenfunctions of \eqref{eqn:VNLSLaxPair} are
\begin{align}
  \label{eqn:VNLSEigenFunctions}
  v(x,t) &= e^{\Omega t} y_1(x) \begin{pmatrix}
    a \\ -(D-\Omega I_2)^{-1}Ca
  \end{pmatrix},
\end{align}
where $a\in\C$ is an arbitrary scalar. The scalar function $y_1(x)$ is
determined by substitution in the $x$~equation~\eqref{eqn:VNLSLaxPair}: 
\begin{align}
  y_1' = \lt( \alpha - \beta^\intercal (D-\Omega I_2)^{-1}C\rt)y_1,\label{eqn:VNLSGood}
\end{align}
so that
\begin{align}
  y_1 &= \hat y_1 \exp\lt(\int \lt(\alpha - \beta^\intercal(D-\Omega
  I_2)^{-1}C\rt)~\d x\rt),
\end{align}
where $\hat y_1$ is a constant.  Thus, $\zeta \in \sigma_L$ provided that
\begin{align}
  \left|\RE \int \lt(\alpha - \beta^\intercal(D-\Omega I_2)^{-1}C\rt)~\d x\right|<\infty\qquad
\end{align}
for all $x\in\overline{\R}$. For periodic potentials and $\zeta\in \mathbb{R}$, this
becomes
\begin{align}
  \RE \Langle \beta^\intercal(D-\Omega I_2)^{-1}C\Rangle = 0.
\end{align}
For $\zeta \in \R$,
\begin{align}
  &A^\dagger = -A,& &C^\dagger = -B^\intercal,& & D^\dagger = -D,& &(\beta^\intercal)^\dagger = -\gamma,&
\end{align}
where $F^\dagger = (F\cc)^\intercal$ is the conjugate transpose of $F$. It
follows that $T$ defined by \eqref{eqn:VNLSLaxPair} is skew-adjoint and $\Omega \in
i\R$. Further,
\begin{align}
&(D-\Omega I_2)^\dagger
  = -(D-\Omega I_2).
\end{align}
It follows that
\begin{align}
\begin{split}
  \RE \beta^\intercal(D-\Omega I_2)^{-1}C &= \frac12 \lt[ \beta^\intercal(D-\Omega I_2)^{-1}C +
  C^\dagger ((D-\Omega I_2)^{-1})^\dagger (\beta^\intercal)^\dagger\rt]\\
  &= \frac12 \lt[\Tr \lt( \beta^\intercal(D-\Omega I_2)^{-1}  C -
    B^\intercal (D-\Omega
  I_2)^{-1} \gamma\rt)\rt]\\
  &= \frac12 \lt[\Tr \lt( (D-\Omega I_2)^{-1} C \beta^\intercal - (D-\Omega
  I_2)^{-1}\gamma B^\intercal\rt)\rt]\\
  &= \frac12 \lt[\Tr \lt( (D-\Omega I_2)^{-1}\lt( C\beta^\intercal - \gamma
  B^\intercal\rt)\rt)\rt]\\
  &= \frac12 \lt[ \Tr \lt( (D-\Omega I_2)^{-1} (-D_x)\rt)\rt]\\
  &= -\frac12  \frac{\partial_x \det(D-\Omega I_2)}{\det(D-\Omega I_2)}\\
  &= -\frac12 \partial_x \log \det(D-\Omega I_2),
\end{split}
\end{align}
so that
\begin{align}
  \RE\Langle \beta^\intercal(D-\Omega I_2)^{-1}C\Rangle = 0,
\end{align}
and $\Omega(\zeta) \in i\R$ for $\zeta \in \R$. This can be verified using the
method described in \cite{FFHM}.

The work above can be generalized to the Matrix NLS (MNLS) equation,
\begin{align}
  i U_t + U_{xx} -2\kappa U U\cc U &= 0,\label{eqn:MatrixNLS}
\end{align}
where $U$ is an $\ell_1\times \ell_2$ matrix for $\ell_1,\ell_2\in \N$, and
$\kappa = -1$ and $\kappa = 1$ correspond to the focusing and defocusing cases
respectively. The Lax pair for the MNLS equation is given by
\begin{subequations}
\begin{align}
  \Psi_x &= \begin{pmatrix}
    -i\zeta I_{\ell_1} & Q\\
    R & i\zeta I_{\ell_2}
  \end{pmatrix} \Psi = X\Psi,\\
  \Psi_t &= \begin{pmatrix} A & B \\ C & D\end{pmatrix}\Phi = T\Psi,\\
  A = -2i \zeta^2 I_{\ell_1} - i QR, \qquad B &= 2\zeta Q + i Q_x, \qquad C =
  2\zeta R - iR_x, \qquad D = 2i\zeta^2 I_{\ell_2}+iRQ.
\end{align}
\end{subequations}
Here, $Q$ and $R$ are $\ell_1\times \ell_2$ and $\ell_2\times \ell_1$
complex-valued matrices respectively. The $x$ equation may be written as a
spectral problem,
\begin{align}
  \zeta \Psi &= \begin{pmatrix}
    iI_{\ell_1} \partial_x & -iQ\\
    iR & -i I_{\ell_2}\partial_x
  \end{pmatrix}
  \Psi = L\Psi.
\end{align}
$L$ is self adjoint if $R\cc = Q$, hence $\sigma_L \subset \R$ if $R\cc = Q$.
The compatibility conditions are the same as \eqref{eqn:compatibilityVNLS}. The
steps above apply in a straightforward but cumbersome manner to establish that
$\cQ = \{\zeta \in \R: \Omega(\zeta)\in i\R\}\subset \sigma_L$ and
$\Omega(\cQ)\subset i\R\cap \sigma_{\cL}.$ Details are omitted here for brevity.

\section{Conclusion}
The stability spectrum and the Lax spectrum for solutions of many integrable
equations on the whole line have been characterized for some time. The same level of
understanding for the periodic problem does not exist. One reason the whole
line problem is more straightforward to study is the ability to do spatial
asymptotics to find the essential spectrum that contains the unbounded
components of the spectrum. In this work, \te{we have given an asymptotic
characterization of all unbounded components of the Lax spectrum for a number
of integrable equations, using \eqref{eqn:newLaxSpectrumConditions} (see Remark
\ref{remark:asymptotics2}). }

We provided two theorems (Theorems~\ref{thm:AKNSThm} and \ref{thm:nonAKNSThm})
with easily verifiable assumptions that establish that real Lax spectra
corresponds to stable modes of the linearization for a number of equations in
and not in the AKNS hierarchy. We applied the theorems to a number of examples.
The methods described in this paper can be applied to other equations not
mentioned in Sections \ref{sec:AKNSExamples} and \ref{sec:examples}. Some other
examples include: the KdV equation \cite{bottman2009kdv}, Hirota's equation (or
the mixed generalized NLS-generalized mKdV equation) \cite{MR0001169,
kodama1987nonlinear}, the Modified Vector dNLS equation \cite{DMVEquation}, the
Massive Thirring Model \cite{kaupMTM}, the O$_4$ nonlinear $\sigma$-model
\cite{lund1977example}, the complex reverse space-time nonlocal mKdV equation
\cite{ablowitz-nonlocalHierarchy}, and the reverse space-time nonlocal
generalized sine-Gordon equations \cite{ablowitz-nonlocalHierarchy}.

\begin{appendices}

\section{The Floquet discriminant \label{sec:FloquetDiscriminant}}
A common tool for characterizing the Lax spectrum for periodic potentials is the
Floquet discriminant \cite{ablowitz1996computational, calini2011squared,
MR1123280, leeThesis}. The Floquet discriminant is typically approximated
numerically since the eigenfunctions of the $x$ equation are unknown for generic
potentials. In our framework, we have explicit expressions for the eigenfunctions
\eqref{eqn:theAKNSEigenfunctions}.  Since $\Omega(\zeta)$ is defined by its square,
\eqref{eqn:Omega} defines two different values of $\Omega$ for every value of
$\zeta$ for which $\Omega(\zeta)\neq 0$. Hence
\eqref{eqn:theAKNSEigenfunctions} defines two linearly independent solutions of
\eqref{eqn:isospectralLaxPair} except for when $\Omega(\zeta)=0$. When
$\Omega(\zeta)=0$, only one solution is generated by
\eqref{eqn:theAKNSEigenfunctions} and a second solution is found using the
method of reduction of order. The solution found by reduction of order is
algebraically unbounded so it is not an eigenfunction. For $\Omega(\zeta)\neq
0$, the two eigenfunctions of \eqref{eqn:isospectralLaxPair} are
\begin{align}
  \phi_\pm(x,t) &= e^{\pm \Omega t} y_{\pm}(x) \begin{pmatrix} - \hat B(x) \\
    \hat A(x) - \Omega_{\pm}\end{pmatrix}.
\end{align}
We use one choice of \eqref{eqn:theAKNSEigenfunctions} and one choice of $y_1$.
The following computations proceed similarly for the other choices. A
fundamental matrix solution (FMS) of the $x$-equation of
\eqref{eqn:isospectralLaxPair} is 
\begin{align}
  M(x) &= \begin{pmatrix}
    - \hat B(x) y_+(x) & - \hat B(x)y_-(x)\\
    (\hat A(x)-\Omega_+)y_+(x) & (\hat A(x)-\Omega_-)y_-(x)
  \end{pmatrix},
\end{align}
where dependence on $t$ has been omitted. The FMS normalized to the identity
is given by
\begin{align}
  \tilde M(x; x_0) &= M^{-1}(x_0)M(x).
\end{align}
To simplify notation, we define
\begin{align}
  I_{\pm}(x;\zeta) &= -\int \lt(\hat \alpha + \frac{\hat\beta \hat C}{\hat
  A-\Omega_\pm}\rt)~\d x.
\end{align}
In this section we use Assumption~\ref{assumption:alphaPeriodic} instead of
assuming that $\alpha \in i\R$. 
\begin{assumption}\label{assumption:alphaPeriodic}
  $\hat \alpha$ and $\hat\beta C/(\hat A-\Omega)$ are periodic in $x$ with
  the same period $P$ as the solution.
\end{assumption}
\noindent Under Assumption~\ref{assumption:alphaPeriodic}, each of the
integrands in \eqref{eqn:newLaxSpectrumConditions} is $P$-periodic, and we may
use any of the representations for $I$. It follows that
\begin{align}
  I_{\pm}(x+P;\zeta) &= I_{\pm}(x;\zeta) + I_{\pm}(P;\zeta).
\end{align}
Then
\begin{align}
  y_{\pm}(x+P) &= y_{\pm}(x) e^{I_{\pm}(x;\zeta)}e^{I_{\pm}(P;\zeta)} =
  y_{\pm}(x) \Gamma_{\pm}(P),
\end{align}
and
\begin{align}
  \tilde M(x+P; x_0) &= M^{-1}(x_0)M(x+P) = M^{-1}(x_0)M(x)\Gamma(P) = \tilde
  M(x;x_0) \Gamma(P),
\end{align}
where
\begin{align}
  \Gamma(P) &= \begin{pmatrix} \Gamma_+(P) & 0 \\ 0 & \Gamma_-(P)\end{pmatrix}
\end{align}
is the transfer matrix. In order for solutions to be bounded in space, it must
be that the eigenvalues of the transfer matrix have unit modulus. Thus,
\begin{align}
  \RE\lt( I_{\pm}(P;\zeta)\rt) = 0.
\end{align}
If Assumption~\ref{assumption:alphaPeriodic} holds, this is equivalent to
\eqref{eqn:newLaxSpectrumConditions}. The Floquet discriminant is defined by
\begin{align}
  \Delta(\zeta) &= \tr(\Gamma(P)) = \Gamma_+(P) + \Gamma_-(P),
\end{align}
and
\begin{align}
  \sigma_L = \lt\{ \zeta \in \C : \IM(\Delta(\zeta)) = 0 \quad \text{and} \quad
  \abs{\Delta(\zeta)}\leq 2\rt\}.\label{eqn:LaxSpectrumFloquet}
\end{align}
Both definition \eqref{eqn:LaxSpectrumFloquet} and
\eqref{eqn:newLaxSpectrumConditions} require numerical computation or the use of
special functions. We prefer working with \eqref{eqn:newLaxSpectrumConditions}
directly, but we present the Floquet discriminant because of its popularity.

\end{appendices}

{
  \footnotesize
  \bibliographystyle{siam}
  \bibliography{mybib}

\begin{thebibliography}{10}

\bibitem{ablowitz1996computational}
{\sc M.~Ablowitz, B.~Herbst, and C.~Schober}, {\em Computational chaos in the
  nonlinear {S}chr{\"o}dinger equation without homoclinic crossings}, Physica
  A: Statistical Mechanics and its Applications, 228 (1996), pp.~212--235.

\bibitem{AKNS74}
{\sc M.~Ablowitz, D.~Kaup, A.~Newell, and H.~Segur}, {\em The inverse
  scattering transform-{F}ourier analysis for nonlinear problems}, Studies in
  Appl. Math., 53 (1974), pp.~249--315.

\bibitem{SolitonsAndIST}
{\sc M.~Ablowitz and H.~Segur}, {\em Solitons and the inverse scattering
  transform}, vol.~4 of SIAM Studies in Applied Mathematics, Society for
  Industrial and Applied Mathematics (SIAM), Philadelphia, Pa., 1981.

\bibitem{ablowitz-nonlocalNLS}
{\sc M.~J. Ablowitz and Z.~H. Musslimani}, {\em Inverse scattering transform
  for the integrable nonlocal nonlinear {S}chr\"{o}dinger equation},
  Nonlinearity, 29 (2016), pp.~915--946.

\bibitem{ablowitz-nonlocalHierarchy}
\leavevmode\vrule height 2pt depth -1.6pt width 23pt, {\em Integrable nonlocal
  nonlinear equations}, Stud. Appl. Math., 139 (2017), pp.~7--59.

\bibitem{bottman2009kdv}
{\sc N.~Bottman and B.~Deconinck}, {\em {K}d{V} cnoidal waves are spectrally
  stable}, Discrete and Continuous Dynamical Systems-Series A (DCDS-A), 25
  (2009), pp.~1163--1180.

\bibitem{bottman2011elliptic}
{\sc N.~Bottman, B.~Deconinck, and M.~Nivala}, {\em Elliptic solutions of the
  defocusing {NLS} equation are stable}, Journal of Physics A: Mathematical and
  Theoretical, 44 (2011), p.~285201.

\bibitem{calini2011squared}
{\sc A.~Calini, S.~F. Keith, and S.~Lafortune}, {\em Squared eigenfunctions and
  linear stability properties of closed vortex filaments}, Nonlinearity, 24
  (2011), pp.~3555--3583.

\bibitem{chen2018rogueNLS}
{\sc J.~Chen and D.~Pelinovsky}, {\em Rogue periodic waves of the focusing
  nonlinear {S}chr{\"o}dinger equation}, Proceedings of the Royal Society A:
  Mathematical, Physical and Engineering Sciences, 474 (2018), p.~20170814.

\bibitem{chen2018rogue}
\leavevmode\vrule height 2pt depth -1.6pt width 23pt, {\em Rogue periodic waves
  of the modified {K}d{V} equation}, Nonlinearity, 31 (2018), p.~1955.

\bibitem{chen2019rogue}
{\sc J.~Chen, D.~E. Pelinovsky, and R.~E. White}, {\em Periodic standing waves
  in the focusing nonlinear schr{\"o}dinger equation: Rogue waves and
  modulation instability}, Physica D: Nonlinear Phenomena, 405 (2020),
  p.~132378.

\bibitem{deconinck2010orbital}
{\sc B.~Deconinck and T.~Kapitula}, {\em The orbital stability of the cnoidal
  waves of the {K}orteweg--de {V}ries equation}, Physics Letters A, 374 (2010),
  pp.~4018--4022.

\bibitem{FFHM}
{\sc B.~Deconinck and J.~Kutz}, {\em Computing spectra of linear operators
  using the {F}loquet-{F}ourier-{H}ill method}, J. Comput. Phys., 219 (2006),
  pp.~296--321.

\bibitem{SGstability}
{\sc B.~Deconinck, P.~McGill, and B.~L. Segal}, {\em The stability spectrum for
  elliptic solutions to the sine-{G}ordon equation}, Phys. D, 360 (2017),
  pp.~17--35.

\bibitem{DMVEquation}
{\sc B.~Deconinck, P.~Meuris, and F.~Verheest}, {\em Oblique nonlinear
  {A}lfv{\'e}n waves in strongly magnetized beam plasmas. part 1. nonlinear
  vector evolution equation}, Journal of plasma physics, 50 (1993),
  pp.~445--455.

\bibitem{mkdvOrbitalStability}
{\sc B.~Deconinck and M.~Nivala}, {\em The stability analysis of the periodic
  traveling wave solutions of the m{K}d{V} equation}, Stud. Appl. Math., 126
  (2011), pp.~17--48.

\bibitem{deconinck2017stability}
{\sc B.~Deconinck and B.~L. Segal}, {\em The stability spectrum for elliptic
  solutions to the focusing {NLS} equation}, Phys. D, 346 (2017), pp.~1--19.

\bibitem{upsalNLS}
{\sc B.~Deconinck and J.~Upsal}, {\em The {O}rbital {S}tability of {E}lliptic
  {S}olutions of the {F}ocusing {N}onlinear {S}chr\"{o}dinger {E}quation}, SIAM
  J. Math. Anal., 52 (2020), pp.~1--41.

\bibitem{ellipticVNLS}
{\sc J.~Elgin, V.~Enolski, and A.~Its}, {\em Effective integration of the
  nonlinear vector {S}chr{\"o}dinger equation}, Physica D: Nonlinear Phenomena,
  225 (2007), pp.~127--152.

\bibitem{MR1123280}
{\sc N.~M. Ercolani and D.~W. McLaughlin}, {\em Toward a topological
  classification of integrable {PDE}s}, in The geometry of {H}amiltonian
  systems ({B}erkeley, {CA}, 1989), vol.~22 of Math. Sci. Res. Inst. Publ.,
  Springer, New York, 1991, pp.~111--129.

\bibitem{forest1982spectral}
{\sc M.~G. Forest and D.~W. McLaughlin}, {\em Spectral theory for the periodic
  sine-{G}ordon equation: a concrete viewpoint}, J. Math. Phys., 23 (1982),
  pp.~1248--1277.

\bibitem{MR0001169}
{\sc J.~Frenkel and T.~Kontorova}, {\em On the theory of plastic deformation
  and twinning}, Acad. Sci. U.S.S.R. J. Phys., 1 (1939), pp.~137--149.

\bibitem{gerdjikov2008integrable}
{\sc V.~Gerdjikov, G.~Vilasi, and A.~B. Yanovski}, {\em Integrable hamiltonian
  hierarchies: Spectral and geometric methods}, vol.~748, Springer Science \&
  Business Media, 2008.

\bibitem{DNLSWaves}
{\sc T.~Hada, C.~Kennel, and B.~Buti}, {\em Stationary nonlinear {A}lfv{\'e}n
  waves and solitons}, Journal of Geophysical Research: Space Physics, 94
  (1989), pp.~65--77.

\bibitem{haragus_kapitula-SpectraPeriodicWaves}
{\sc M.~H\v{a}r\v{a}gu\c{s} and T.~Kapitula}, {\em On the spectra of periodic
  waves for infinite-dimensional {H}amiltonian systems}, Phys. D, 237 (2008),
  pp.~2649--2671.

\bibitem{ivey2008spectral}
{\sc T.~Ivey and S.~Lafortune}, {\em Spectral stability analysis for periodic
  traveling wave solutions of {NLS} and {CGL} perturbations}, Physica D:
  Nonlinear Phenomena, 237 (2008), pp.~1750--1772.

\bibitem{JonesMarangell-sG}
{\sc C.~K. R.~T. Jones, R.~Marangell, P.~D. Miller, and R.~G. Plaza}, {\em On
  the stability analysis of periodic sine-{G}ordon traveling waves}, Phys. D,
  251 (2013), pp.~63--74.

\bibitem{kapitula_promislow}
{\sc T.~Kapitula and K.~Promislow}, {\em Spectral and dynamical stability of
  nonlinear waves}, vol.~185, Springer, 2013.

\bibitem{kaupDNLS}
{\sc D.~Kaup and A.~Newell}, {\em An exact solution for a derivative nonlinear
  {S}chr{\"o}dinger equation}, Journal of Mathematical Physics, 19 (1978),
  pp.~798--801.

\bibitem{kaupMTM}
{\sc D.~J. Kaup and T.~I. Lakoba}, {\em The squared eigenfunctions of the
  massive {T}hirring model in laboratory coordinates}, J. Math. Phys., 37
  (1996), pp.~308--323.

\bibitem{periodicSolnsOfNonlocalNLS1}
{\sc A.~Khare and A.~Saxena}, {\em Periodic and hyperbolic soliton solutions of
  a number of nonlocal nonlinear equations}, J. Math. Phys., 56 (2015),
  pp.~032104, 27.

\bibitem{kodama1987nonlinear}
{\sc Y.~Kodama and A.~Hasegawa}, {\em Nonlinear pulse propagation in a monomode
  dielectric guide}, IEEE Journal of Quantum Electronics, 23 (1987),
  pp.~510--524.

\bibitem{leeThesis}
{\sc J.~J. Lee}, {\em The inverse spectral solution, modulation theory and
  linearized stability analysis of n-phase, quasi-periodic solutions of the
  nonlinear Schrodinger equation}, PhD thesis, The Ohio State University, 1986.

\bibitem{lund1977example}
{\sc F.~Lund}, {\em Example of a relativistic, completely integrable,
  {H}amiltonian system}, Physical Review Letters, 38 (1977), p.~1175.

\bibitem{manakovVNLS}
{\sc S.~V. Manakov}, {\em On the theory of two-dimensional stationary
  self-focusing of electromagnetic waves}, Soviet Physics-JETP, 38 (1974),
  pp.~248--253.

\bibitem{kdvFGOrbitalStability}
{\sc M.~Nivala and B.~Deconinck}, {\em Periodic finite-genus solutions of the
  {K}d{V} equation are orbitally stable}, Phys. D, 239 (2010), pp.~1147--1158.

\bibitem{zakharov_shabat}
{\sc A.~Shabat and V.~Zakharov}, {\em Exact theory of two-dimensional
  self-focusing and one-dimensional self-modulation of waves in nonlinear
  media}, Soviet physics JETP, 34 (1972), p.~62.

\bibitem{tisseur2001quadratic}
{\sc F.~Tisseur and K.~Meerbergen}, {\em The quadratic eigenvalue problem},
  SIAM review, 43 (2001), pp.~235--286.

\end{thebibliography}
}

\end{document}